\newtheorem{theorem}{Theorem}
\newtheorem{definition}{Definition}
\newtheorem{lemma}{Lemma}[section]
\newtheorem{proposition}{Proposition}
\theoremstyle{definition}
\newtheorem{remark}{Remark}[section]
\def\l{\left}
\def\r{\right}
\newcolumntype{C}[1]{>{\centering\arraybackslash}p{#1}}
\newcommand{\ddr}{\mathrm{d}}
\newcommand{\fine}{\hfill $\Box$}
\newcommand{\comillas}[1]{``\,#1\,''}
\definecolor{iblue}{rgb}{0.1,0,0.75}
\definecolor{ired}{rgb}{0.9,0,0.1}
\definecolor{ipurple}{rgb}{0.54,0.17,0.89}
\definecolor{gray}{rgb}{0.5, 0.5, 0.5}
\newcommand{\Ind}{\mathds{1}}
\newcommand{\tp}{\tilde{p}}
\newcommand{\tx}{\tilde{x}}
\newcommand{\Prob}{\mathbb{P}}
\newcommand{\X}{\mathbb{X}}
\newcommand{\Y}{\mathbb{Y}}
\newcommand{\R}{\mathbb{R}}
\newcommand{\D}{\mathcal{D}}
\newcommand{\iid}{\overset{iid}{\sim}}
\newcommand{\ind}{\overset{ind}{\sim}}
\newcommand{\Be}{\mathsf{Be}}
\newcommand{\clJ}{\mathcal{J}}
\newcommand{\clA}{\mathcal{A}}
\newcommand{\rest}{\cdots}
\newcommand{\pier}{{i,j,\bm\gamma}}
\newcommand\restr[2]{{
  \left.\kern-\nulldelimiterspace 
  #1 
  \littletaller 
  \right|_{#2} 
  }}
\newcommand{\littletaller}{\mathchoice{\vphantom{\big|}}{}{}{}}
\title{\bf Speeding up the ordered allocation sampler}
\author{
  Mar\'ia F. Gil--Leyva\\
  Department of Probability and Statistics, IIMAS--UNAM, M\'exico\\
  \texttt{marifer@sigma.iimas.unam.mx} \\
  \and
  Fidel Selva\\
  Faculty of Science--UNAM, M\'exico\\
  \texttt{fidel@ciencias.unam.mx} \\
  \and
  Pierpaolo De Blasi\\
  Collegio Carlo Alberto and ESOMAS Department,\\ University of Torino, Italy\\
  \texttt{pierpaolo.deblasi@unito.it} \\
}
\date{}
\begin{document}
\maketitle

\begin{abstract}
The ordered allocation sampler is a Gibbs sampler designed to explore the posterior distribution in nonparametric mixture models. It encompasses both infinite mixtures and finite mixtures with random number of components, and it has be shown to possess mixing properties that pair well with collapsed, or marginal, samplers that integrate out the mixing distribution. The main advantage is that it adapts to mixing priors that do not enjoy tractable predictive structures needed for the implementation of marginal sampling methods. Thus it is as widely applicable as other conditional samplers while enjoying better algorithmic performances. In this paper we provide a modification of the ordered allocation sampler that enhances its performances in a substantial way while easing its implementation. In addition, exploiting the similarity with marginal samplers, we are able to adapt to the new version of the sampler the split-merge moves of Jain and Neal. Simulation studies confirm these findings.
 \end{abstract}

\noindent%
\textbf{{\it Keywords:}} Gibbs sampling; mixture models; ordered allocation sampler;  species sampling models; split-merge moves.
\vfill

\newpage

\section{Introduction}

Mixture models are well suited to model exchangeable heterogeneous data where observations can be clustered into groups, each group coming from a different population uniquely identified by a mixture component. 
Each component brings its own parameter, which determines the distribution of the corresponding population, and its own weight that represents the probability that an observation is associated to such population. The Bayesian approach consists in taking component parameters and weights as random, that is placing a prior distribution on the mixing distribution.
Since the number of components or distinct populations is rarely known in advance, it is convenient to consider a random and arbitrarily large number of components \citep[e.g.][]{Ric:Gre:97,Mil:Har:18} or an infinite number of them \citep[cf.][]{Lo:84}. However, posterior inference for these types of mixtures models can be challenging due to the need to explore parameter spaces of random or infinite dimensions. Gibbs sampling methods are commonly employed to sample from the posterior distribution via the introduction of allocation variables that identify each observation with one of the mixture components. They can be broadly divided into marginal and conditional samplers. The first marginal algorithm was proposed by \citet{Esc:94,Esc:Wes:95} in the case of Dirichlet process mixing prior, see \citet{Nea:00} for successive developments.
Marginal samplers are termed this way because they integrate out the weights and unobserved component parameters. By doing so, one can avoid the problem of dealing with a varying and potentially infinite model dimension and instead design a sampler that evolves in the space of partitions defined by the ties in allocation variables. This marginalization naturally yields good mixing properties, cf.  Algorithm 2 and 8 of \citet{Nea:00}. However, it requires a certain level of mathematical tractability of the mixing prior, specifically a generalized P\'olya urn scheme representation of the predictive distributions, \citep{Bla:Mac:73,Pit:96ims} as characterized by the  \emph{exchangeable partition probability function} (EPPF). 
Unfortunately, this representation is only available for a handful of mixing priors, preventing marginal samplers from being widely applicable. Conditional samplers such as the truncated Gibbs sampler by \cite{Ish:Jam:01}, the slice samplers by \cite{Walk:07} and \cite{Kall:etal:11}, and the retrospective sampler by \cite{Pap:Rob:08} stand as a feasible alternative in such cases. These samplers condition on all the weights and component parameters, thus requiring some sort of truncation or augmentation to deal with the infinite model dimension. A major drawback in comparison with marginal samplers is that they evolve in the space of cluster's labels, instead of evolving in the space of partitions, which hinders their mixing \citep{Port:etal:06}. 

The ordered allocation sampler by \cite{Deb:Gil:23} is a conditional sampler that address these issues by conditioning on component parameters and weights in the order in which they are discovered by the data. The sampler operates with allocation variables that preserve the least element order of the blocks of the partition. Since $n$ data points can not be generated from more than $n$ distinct components, the sampler at most requires to update $n$ component parameters and weights, and in practice much less. In particular, there is no need to recur to any truncation or model augmentation to deal with a potentially infinite number of mixture components. A key advantage of working with {\it ordered} allocation variables is that the sampler evolves in the space of ordered partitions, hence in a space that is similar to that where marginal samplers do. This endows the ordered allocation sampler with better mixing properties than other conditional samplers whilst retaining their wide applicability. The restricted support of the allocation variables, however, demands taking some caution in its implementation. In compliance with the least element order, the first variables have less freedom in allocating the corresponding observations to a mixture component when compared with the last ones. Thus, the initial ordering of data points can affect the mixing properties of the  sampler. \cite{Deb:Gil:23} suggested to randomly permute the data at each Gibbs scan by exploiting the exchangeability of the data. This acceleration step has been found to be crucial to assure better mixing properties when compared with other conditional samplers. There remains however other issues that need care. One disadvantage is the limited ability of the sampler to vary the number of discovered components in a single scan.
Moreover, the updating of the allocation variables can be tricky due to the need to identify first the \emph{set of admissible moves}, that is moves that preserve the least element order. See Section 2 for more details.

We effectively tackle these issues by designing a variant of the ordered allocation sampler that evolves in the space of unordered partitions, alike marginal samplers, instead of the space of ordered partitions with blocks in the least element order.
As we will prove, this change of space is possible due to the exchangeability of observations together with distributional symmetries of weights labelled in order of appearance \citep[e.g.][]{Pit:96}. 
The new version of the ordered allocation sampler not only surpasses the original one in every aspect, but it makes it possible to adapt tools that enhance the performance of marginal methods. In particular, as a second contribution we derive \emph{split-merge moves} in the vein of  \cite{Jai:Nea:04,Jai:Nea:07} but tailored to the ordered allocation sampler.
As in the original version by Jain and Neal, these moves can be used as an acceleration step for the Gibbs sampler or they can be used as a standalone Monte Carlo Markov Chain (MCMC) sampling scheme. 

Before we delve into details, it is important to note that, while the ordered allocation sampler was originally designed for both infinite mixtures and mixtures with random number of components, in this paper we focus only on infinite mixtures. However, the improvements we propose inhere can be readily applied to mixtures with a random number of components. 
Infinite mixtures models are those where the mixing distribution is modeled via the Dirichlet process \citep{Fer:73,Esc:Wes:95}, the Pitman-Yor process \citep{Pit:Yor:92,Pit:Yor:97}, normalized random measures with independent increments \citep{Reg:etal:03,Fav:etal:16}, and exchangeable stick-breaking \citep{Gil:Men:21}, among others.

\section{Ordered allocation sampler}\label{sec:OAS}

In this section we recall the theoretical setting of the ordered allocation sampler (OAS) for posterior inference on  Bayesian nonparametric mixture models. Let the data be denoted by $\bm{y}_{1:n} = (y_i)_{i=1}^n$ with values in a Borel space $\mathbb{Y}$. The data points are taken as exchangeable observations, specifically conditionally independent and identically distributed (iid) from a random mixture of densities
\begin{equation}\label{eq:mix}
  \int_{\X} g(y\mid x) P(dx) = \sum_{j=1}^\infty p_j g(y\mid x_j).
\end{equation}  
where $g(\cdot\mid x)$ is a density over $\Y$ for each $x$  in a Borel parameter space $\X$. The mixing distribution $P = \sum_{j=1}^\infty p_j \delta_{x_j}$ is a proper \emph{species sampling model} \citep[see][]{Pit:96ims,Pit:06} over $\X$.
This means that the atoms $\bm{x} = (x_j)_{j=1}^{\infty}$ of $P$, or mixture component parameters, are iid from a diffuse distribution, $\nu$, called base measure, and are independent of the weights, $\bm{p} = (p_j)_{j=1}^{\infty}$, which are non-negative random variables that sum up to one. With the introduction of an exchangeable \emph{species sampling sequence}, $\bm{\theta} = (\theta_i)_{i=1}^{\infty}$, driven by $P$, we have
\begin{equation}\label{eq:sss}
y_i \mid \theta_i \ind g(\cdot\mid \theta_i), \quad \theta_i \mid P \iid P, \quad i \geq 1.
\end{equation}
In applications one can envision two different inferential goals, estimation of the density of observation $y_i$'s, and clustering of the data through the allocation of each $y_i$ to a single mixture component. Posterior inference can be carried out by MCMC methods or variational methods \citep[cf.][]{Blei:Jor:06}. Among MCMC methods, Gibbs sampler algorithms are the most commonly employed. They can be broadly classified as conditional or marginal samplers. Conditional samplers \citep[e.g.][]{Ish:Jam:01,Walk:07,Pap:Rob:08,Kall:etal:11} are characterized by the inclusion of the mixing distribution, $P$, in the sampler by updating its atoms, $\bm{x}$, and weights, $\bm{p}$. In contrast, marginal methods \citep[e.g.][]{Nea:00} integrate out $P$ and model directly the law of $\bm{\theta}$ through 
the prediction rule  $\Prob(\,\theta_{i+1}\in \cdot\mid \bm{\theta}_{1:i}\,)$ of the species sampling sequence (here $\bm{\theta}_{1:i} = (\theta_1,\ldots,\theta_i)$). While marginal samplers enjoy better mixing properties than conditional ones \citep[see][]{Port:etal:06}, they cannot be used when the prediction rule, $\Prob(\,\theta_{i+1}\in \cdot\mid \bm{\theta}_{1:i}\,)$ is not available in explicit form, which is the case for most species sampling models.
The ordered allocation sampler (OAS) \citep{Deb:Gil:23} is a conditional sampler as it includes the mixing distribution, however, operationally it is more similar to a marginal sampler because it relies on the conditional prediction rule of the species sampling sequence given the atoms and weights of $P$ in order of appearance. More precisely, it is motivated by the following result.

\begin{theorem}\label{theo:rep_sss}
Let $\bm{\theta} = (\theta_i)_{i=1}^\infty$ be a species sampling sequence driven by the species sampling process $P = \sum_{j=1}^\infty p_j \delta_{x_j}$, with base measure $\nu$. Define the $j$th distinct value to appear in $\bm{\theta}$ through $\tx_j = \theta_{M_j}$, where
$M_{j} = \min\{i > M_{j-1} :
  \theta_i \not\in \{\tx_1,\ldots,\tx_{j-1}\}\}$,
for $j \geq 2$ and $M_1 = 1$. Then,
\begin{enumerate}[i.]
\item
$\tx_j = x_{\alpha_j}$, for every $j \geq 1$, and some $\bm{\alpha} = (\alpha_j)_{j=1}^{\infty}$ satisfying
\begin{equation}\label{eq:size-biased_pick}
\begin{split}
  \Prob(\alpha_1 = j\mid \bm{p})
  &= p_j, \\
  \Prob(\alpha_{l+1} = j\mid \bm{p},\bm{\alpha}_{1:l})
  &=\frac{p_j}{1-\sum_{i=1}^{l}p_{\alpha_i}}
  \Ind_{\{j \not\in \bm{\alpha}_{1:l}\}},
  \quad l \geq 1.
\end{split}
\end{equation}
with $\bm{\alpha}_{1:l} = (\alpha_1,\ldots,\alpha_l)$. In particular, $\bm{\tx} = (\tx_j)_{j=1}^{\infty}$ are iid from the base measure $\nu$.
\item
The almost sure limits
  $$
  \tilde{p}_j
  = \lim_{n \to \infty}\frac{|\{i \leq n: \theta_i
  = \tilde{x}_j\}|}{n}, \quad j \geq 1,
  $$
exist, $\tp_j > 0$, $\sum_{j=1}^\infty \tp_j = 1$ almost surely, and $\tp_j = p_{\alpha_j}$, for every $j \geq 1$, with $\bm{\alpha}$ as in i.
\item
$\theta_1 = \tx_1$, and the conditional prediction rule of $\bm{\theta}$ given $\bm{\tp}$ and $\bm{\tx}$ is
  $$
  \Prob(\theta_{i+1}\in \cdot
  \mid  \bm{\tp},\bm{\tx},\bm{\theta}_{1:i})
  = \sum_{j=1}^{k_i}\tilde{p}_j\delta_{\tilde{x}_j}
  + \bigg(1- \sum_{j=1}^{k_i}\tilde{p}_j\bigg)
  \delta_{\tilde{x}_{k_i +1}},
  $$
for every $i \geq 1$, where $k_i$ is the number of distinct values in $\bm{\theta}_{1:i} = (\theta_1,\ldots,\theta_i)$.
\item
$\bm{\tp}$, $\bm{p}$ and $\bm{\alpha}$ are independent of $\bm{\tx}$.
\end{enumerate}
\end{theorem}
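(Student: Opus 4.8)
The plan is to recast everything in terms of the \emph{raw allocations} $\bm A=(A_i)_{i\ge 1}$, defined by $\theta_i=x_{A_i}$; this is legitimate because $\nu$ is diffuse, so the atoms $(x_j)$ are a.s.\ pairwise distinct, $\bm A$ is well defined, and $\bm A\mid(\bm p,\bm x)$ is iid with $\Prob(A_i=j\mid\bm p,\bm x)=p_j$, in particular not depending on $\bm x$. In this language $\bm\alpha$ is the sequence of distinct values of $\bm A$ in order of appearance, $\tx_j=x_{\alpha_j}$, $\tp_j=p_{\alpha_j}$, and the \emph{ordered allocations} $\bm D=(D_i)$ defined by $A_i=\alpha_{D_i}$ satisfy $\theta_i=\tx_{D_i}$. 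The structural fact driving the proof is that $\bm A$, hence $\bm\alpha$ and $\bm D$, is a functional of $\bm p$ together with extra randomness independent of $\bm x$, so that (using also $\bm p\perp\bm x$) the whole triple $(\bm p,\bm\alpha,\bm D)$ is independent of $\bm x$, while $\bm x$ is iid from $\nu$.

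For part i.\ I would condition on $P$, where $\bm A$ is an iid$(\bm p)$ sequence. Then $\alpha_1=A_1$ has conditional law $p_j$; and, writing $M_l$ for the appearance time of the $l$th distinct value of $\bm A$ — a stopping time for $(A_i)$ — the post-$M_l$ sequence is again iid$(\bm p)$ and independent of $\bm A_{1:M_l}$, so $\alpha_{l+1}$ is its first entry avoiding $\{\alpha_1,\dots,\alpha_l\}$, i.e.\ an iid$(\bm p)$ draw conditioned on that complement: $\Prob(\alpha_{l+1}=j\mid\bm p,\bm x,\bm A_{1:M_l})=p_j(1-\sum_{i\le l}p_{\alpha_i})^{-1}\Ind_{\{j\notin\bm\alpha_{1:l}\}}$. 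This is exactly \eqref{eq:size-biased_pick}, and since the right-hand side involves neither $\bm x$ nor $\bm A_{1:M_l}$ beyond $\bm\alpha_{1:l}$, iterating gives $\bm\alpha\perp\bm x\mid\bm p$; with $\bm x\perp\bm p$ this yields $\bm x\perp(\bm p,\bm\alpha)$. As the entries of $\bm\alpha$ are distinct, conditioning on $(\bm p,\bm\alpha)$ leaves $\bm{\tx}=(x_{\alpha_j})$ iid from $\nu$, a conditional law free of $(\bm p,\bm\alpha)$; hence $\bm{\tx}$ is iid from $\nu$ and $\bm{\tx}\perp(\bm p,\bm\alpha)$. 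Since $\bm{\tp}$ is a function of $(\bm p,\bm\alpha)$, this already delivers part iv.

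Part ii.\ again reduces to conditioning on $P$: the strong law for the iid$(P)$ sequence $\bm\theta$ gives $n^{-1}|\{i\le n:\theta_i=x_k\}|\to p_k$ a.s.\ simultaneously over all $k$, hence $n^{-1}|\{i\le n:\theta_i=\tx_j\}|\to p_{\alpha_j}=\tp_j$; positivity holds because $\alpha_j$ only takes values $k$ with $p_k>0$; and $\sum_j\tp_j=1$ because every atom with positive mass appears a.s.\ in $\bm\theta$, so the range of $j\mapsto\alpha_j$ equals $\{k:p_k>0\}$ and $\sum_j\tp_j=\sum_{k:p_k>0}p_k=1$ (using that $P$ is proper).

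Part iii.\ is where the real work lies, and I expect it to be the main obstacle. The identity $\theta_1=\tx_1$ is immediate since $M_1=1$. For the prediction rule the delicate point is that the conditioning field $\sigma(\bm{\tp},\bm{\tx},\bm\theta_{1:i})$ is \emph{not} contained in $\sigma(P,\bm\theta_{1:i})$ — $\bm{\tp}$ depends on the future of $\bm\theta$ through $\bm\alpha$ — so one cannot directly invoke $\theta_{i+1}\mid P,\bm\theta_{1:i}\sim P$. My remedy is to describe the joint law of $(\bm\alpha,\bm D)$ given $\bm p$ by an interleaved urn recursion: $D_1=1$, and having produced $\bm\alpha_{1:k},\bm D_{1:i}$ with $k=\max\bm D_{1:i}$, set $D_{i+1}=m$ with probability $\tp_m$ for $m\le k$ and $D_{i+1}=k+1$ with the complementary probability $1-\sum_{m\le k}\tp_m$, drawing in the latter case $\alpha_{k+1}$ size-biased from the leftover mass. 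One checks that this reproduces $\Prob(\bm A_{1:n}=\bm a\mid\bm p)=\prod_i p_{a_i}$, so it is a faithful construction, and that the conditional law of $D_{i+1}$ depends on the past only through $\tp_1,\dots,\tp_{k_i}$ with $k_i=\max\bm D_{1:i}$ (the number of distinct values in $\bm\theta_{1:i}$), and in particular is unaffected by further conditioning on $\bm{\tx}$, since $\bm D\perp\bm x$ given $(\bm p,\bm\alpha)$. Therefore $\Esp[f(\theta_{i+1})\mid\bm p,\bm\alpha,\bm D_{1:i},\bm{\tx}]=\sum_{m\le k_i}\tp_m f(\tx_m)+(1-\sum_{m\le k_i}\tp_m)f(\tx_{k_i+1})$, and this expression is already $\sigma(\bm{\tp},\bm{\tx},\bm\theta_{1:i})$-measurable; since $\sigma(\bm{\tp},\bm{\tx},\bm\theta_{1:i})\subseteq\sigma(\bm p,\bm\alpha,\bm D_{1:i},\bm{\tx})$, the tower property collapses it to the stated conditional law of $\theta_{i+1}$. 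The only extra bookkeeping — that $\theta_{i+1}\notin\{\tx_1,\dots,\tx_{k_i}\}$ forces $\theta_{i+1}=\tx_{k_i+1}$ — is precisely the constraint $D_{i+1}\le k_i+1$ built into the recursion, so no separate argument is needed.
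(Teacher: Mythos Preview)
Your proposal is correct. The paper itself does not supply a proof of this theorem; it states that the result follows from \cite{Pit:95,Pit:96} and defers to the supplementary material of \cite{Deb:Gil:23} for details. Your argument via raw allocations $A_i$ and the size-biased extraction of $\bm\alpha$ is exactly in Pitman's spirit, and the same device (an iid integer-valued sequence $e_i\mid\bm p\sim\sum_j p_j\delta_j$) is in fact what the paper uses in its proof of Lemma~\ref{lem:G_k} in Appendix~\ref{app:SBW}, so your approach is fully aligned with the framework the paper invokes.

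The only step worth making more explicit is the passage, in part~iii., from the \emph{interleaved} construction of $(\bm\alpha,\bm D)$ to conditioning on the \emph{full} sequence $\bm\alpha$. What you are tacitly using is that the size-biased draws $\alpha_{k+1},\alpha_{k+2},\ldots$ given $(\bm p,\bm\alpha_{1:k})$ do not depend on the times at which new values appear, so one may equivalently first sample all of $\bm\alpha$ from \eqref{eq:size-biased_pick} and then run the $\bm D$-recursion given $(\bm p,\bm\alpha)$; a one-line check that both constructions yield $\Prob(\bm A_{1:n}=\bm a\mid\bm p)=\prod_i p_{a_i}$ confirms this. With that said, the conditional law of $D_{i+1}$ given $(\bm p,\bm\alpha,\bm D_{1:i},\bm{\tx})$ is transparent, and your tower-property reduction to $\sigma(\bm{\tp},\bm{\tx},\bm\theta_{1:i})$ is clean and correct.
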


Notice that $\bm{\alpha}$ and $\bm{\tp} = (\tp_j)_{j=1}^{\infty}$ as in Theorem \ref{theo:rep_sss} are obtained by sampling without replacement from $\{1,2,\ldots\}$ and $\{p_1,p_2,\ldots\}$ respectively, with probabilities $(p_1,p_2,\ldots)$. Thus, the long run frequencies $\bm{\tp}$ of the species sampling sequence $\bm{\theta}$ in order of appearance are {\it in size-biased order}, or in other words, they are invariant under size-biased permutation \citep[ISBP,][]{Pit:96}.
The converse of Theorem \ref{theo:rep_sss} is also true in the sense that if the distinct values, $\bm{\tx}$, that $\bm{\theta}$ exhibits in order of appearance are iid from $\nu$, the sequence of long-run proportions $\bm{\tp}$ is ISBP, $\bm{\tp}$ is independent of $\bm{\tx}$, and \emph{iii} in Theorem \ref{theo:rep_sss} holds, then $\bm{\theta}$ is a species sampling sequence driven by $P = \sum_{j=1}^{\infty}\tp_j\delta_{\tx_j} = \sum_{j=1}^{\infty}p_{\alpha_j}\delta_{x_{\alpha_j}} = \sum_{j=1}^{\infty}p_{j}\delta_{x_{j}}$. Theorem \ref{theo:rep_sss} follows from results derived by \cite{Pit:95,Pit:96}, and the proof of this exact statement can be found in the supplementary material of \cite{Deb:Gil:23}.

Taking inspiration from Theorem \ref{theo:rep_sss}, the OAS works with the component parameters and the weights in order of appearance, $\bm{\tx}$ and $\bm{\tp}$, as well as the \emph{ordered allocation variables}, $\bm{d}_{1:n} = (d_i)_{i=1}^n$ given by $d_i = j$ if and only if $\theta_i = \tx_j$. This way $y_i\mid (d_i,\bm{\tx}) \ind g(\cdot \mid \tx_{d_i})$, $i \geq 1$ and it follows from \emph{iii} in Theorem \ref{theo:rep_sss} that $d_1 = 1$ and
\[
\Prob(d_{i+1} \in \cdot \mid \bm{d}_{1:i},\bm{\tp}) = \sum_{j=1}^{k_i}\tilde{p}_j\delta_{j}
  + \bigg(1- \sum_{j=1}^{k_i}\tilde{p}_j\bigg)
  \delta_{k_i +1},
\]
for every $i \geq 1$ where $k_i = \max\{\bm{d}_{1:i}\}$ coincides with the number of blocks featured by the first $i$ observations. This yields the augmented likelihood
\begin{equation}\label{eq:like_sb}
\pi(\bm{y}_{1:n},\bm{d}_{1:n} \mid \bm{\tp},\bm{\tx})
= \prod_{j=1}^{k_n}\tp_j^{n_j-1}
  \bigg(1-\sum_{l=1}^{j-1}\tp_l\bigg)
  \prod_{i \in D_j}g(y_i\mid\tx_{j})
  \Ind_{\mathcal{D}},
\end{equation}
where $k_n = \max\{\bm{d}_{1:n}\}$, $D_j = \{i \leq n: d_i = j\}$, $n_j = |D_j|$, and $\mathcal{D}$ is the event that $\{D_1,\ldots,D_{k_n}\}$ is a partition of $\{1,\ldots,n\}$ with blocks in the \emph{least element order}, specifically $D_j \neq \emptyset$, for $j \leq k_n$, and $\min\l(D_1\r)< \min\l(D_2\r) < \cdots < \min\big(D_{{k_n}}\big)$. Now, in order to compute the full conditional distributions required at each iteration of the sampler, it is necessary to specify the prior distribution of $\bm{\tx}$ and $\bm{\tp}$, which takes the form
\begin{equation}\label{eq:prior_sb}
\pi(\bm{\tx},\bm{\tp}) = \prod_{j\geq 1}\nu(\tx_j) \times \pi(\bm{\tp}).
\end{equation}
For some species sampling models such as the Dirichlet process and its two-parameter generalization, the Pitman-Yor process, \citep{Fer:73,Seth:94,Pit:Yor:92,Pit:Yor:97,Ish:Jam:01}, it is well known that the weights that are ISBP 
admit the stick-breaking representation
\[
\tp_1 = v_1, \quad \tp_j = v_j\prod_{l=1}^{j-1}(1-v_l), \quad j \geq 2,
\]
where $v_j \ind \Be(1-\sigma,\beta+j\sigma)$, $j \geq 1$ for some $\sigma \in [0,1)$ and $\beta > -\sigma$. Thus, it is clear that one can (indirectly) specify the prior law of $\bm{\tp}$ via $\bm{v}=(v_i)_{i=1}^\infty$.
Normalized random measures with independent increments (NRMIs) \citep{Fav:etal:12,Fav:etal:16} stand as another example of species sampling processes for which a similar characterization of the law of weights in size-biased order is available. Unfortunately, for most species sampling models, the law of this arrangement of weights remains intractable. To keep the OAS applicable in these instances, one can model $\bm{\tp}$ through the weights in no particular order, $\bm{p}$, together with a permutation of their indexes, $\bm{\alpha}$ as in \eqref{eq:size-biased_pick}, that indicate the order in which elements of $\bm{p}$ appear, or are discovered, in an exchangeable sequence driven by $P$. Namely we set $\tp_j = p_{\alpha_j}$ and replace \eqref{eq:like_sb} and \eqref{eq:prior_sb} with
\begin{equation}\label{eq:like}
\pi(\bm{y}_{1:n},\bm{d}_{1:n} \mid \bm{p},\bm{\alpha},\bm{\tx})
= \prod_{j=1}^{k_n}p_{\alpha_j}^{n_j-1}
  \bigg(1-\sum_{l=1}^{j-1}p_{\alpha_l}\bigg)
  \prod_{i \in D_j}g(y_i\mid\tx_{j})
  \Ind_{\mathcal{D}},
\end{equation}
and
\begin{equation}\label{eq:prior}
\pi(\bm{\tx},\bm{p},\bm{\alpha}) = \prod_{j\geq 1} \nu(\tx_j) \times \prod_{j\geq 1} p_{\alpha_j}\bigg(1-\sum_{l=1}^{j-1}p_{\alpha_l}\bigg)^{-1}\Ind_{\clA}\times \pi(\bm{p}).
\end{equation}
respectively, where the event $\clA$ indicates $\alpha_j \neq \alpha_l$ for $j \neq l$. Since $\bm{p}$ has no more distributional requirements other than $\sum_{j=1}^{\infty} p_j = 1$ and $p_j \geq 0$, it is much simpler to (possibly indirectly) specify $\pi(\bm{p})$ than it is to specify $\pi(\bm{\tp})$.

The full conditional distributions that are required to update the random elements are proportional to the product of \eqref{eq:like_sb} and \eqref{eq:prior_sb} or to the product of \eqref{eq:like} and \eqref{eq:prior}. Here we recall how to update the ordered allocation variables, which is the part that concerns the modification we will propose. The full conditional distributions of $\bm{\tx}$ and $\bm{\tp}$ (or $\bm{p}$ and $\bm{\alpha}$) are detailed in Appendix \ref{app:OAS} \citep[also see ][ for further details]{Deb:Gil:23}.


As noted by \cite{Deb:Gil:23}, the ordered allocation variables $\bm{d}_{1:n}$ cannot be updated independently of each other as in other conditional samplers. Instead, similarly to marginal samplers, they need to be updated
\emph{sequentially}, that is one at a time by conditioning on the current value of the remaining ones. We find
\begin{equation}\label{eq:d_post_prev0}
  \Prob(\,d_i = d\mid \rest\,)
  \propto \tp_d\, g(y_i\mid\tx_d) \times \prod_{j=1}^{k_n}
  \tp_j^{-1}\bigg(1-\sum_{l=1}^{j-1}\tp_l\bigg)\Ind_{\D}.
\end{equation}
To sample from this distribution one first identifies the set, $\D_i$, of {\it admissible moves} for $d_i$, which contains all positive integers $d$ for which $\Ind_{\D}=1$ under the assumption $d_i = d$.
For example, if before updating $d_3$ we observe $(d_1,d_2,d_3,d_4) = (1,1,2,3)$, then $\mathcal{D}_3 = \{2\}$, as for any other choice of $d_3 \neq 2$ the least element order of the resulting blocks is violated. In this example no updating of $d_3$ would take place. 
In general, we can first restrict the set $\D_i$ to $d \in \{1,\ldots,k_{i-1},k_{i-1}+1\}$, later we keep $d \in \D_i$ whenever $d_i=d$ preserves the least element order of the corresponding blocks, $D_j= \{l:d_l = j\}$. 
Now, set $k_{-i} = \max\{d_l: l \neq i\}$ and for each $d \in \D_i$ set $k_n^{(d)} = \max\{k_{-i},d\}$, which corresponds to the number of blocks of the partition obtained by setting $d_i=d$. Since either $k_n^{(d)} = k_{-i}$ ($i$ is added to an existing block) or $k_n^{(d)} = k_{-i}+1$ ($i$ forms a new block), we can divide  \eqref{eq:d_post_prev0} by $\prod_{j=1}^{k_{-i}}\tp_j^{-1}(1-\sum_{l=1}^{j-1}\tp_l)$ obtaining
\begin{equation}\label{eq:d_post_prev}
  \Prob(\,d_i = d\mid \rest\,) \propto
  \begin{cases}
  \tp_d g(y_i\mid \tx_d) & \text{ if }  k_n^{(d)} = k_{-i},\\
\l(1-\sum_{l=1}^{k_{-i}}\tp_l\r)g(y_i\mid \tx_d) & \text{ if } k_n^{(d)} = k_{-i}+1,
\end{cases}
\end{equation}
for $d \in \D_i$, and $\Prob(\,d_i = d\mid \rest\,) = 0$ otherwise.
It is straightforward to sample from this distribution once $\D_i$ is found as the latter is a finite set.

As a consequence of the way in which the ordered allocation variables are updated three main issues arise in this algorithm.
(a) The initial ordering of the data points might affect the mixing properties of the sampler. In fact, the set of admissible moves, $\D_i$, of $d_i$ is always contained in $\{1,\ldots,k_{i-1}+1\}$, with $k_0 = 0$ and $k_i = \max\{\bm{d}_{1:i}\}$, in particular $\D_1 = \{1\}$.
Recalling that $d_i$ indicates from which component of the mixture was $y_i$ sampled, this means that while the latest data points can be reallocated to any of the occupied components or to a new one, the first data points will rarely be reassigned to a different component.
(b)
While many components can be created in a single iteration, only the latest group can be emptied because of the restriction that the least element order imposes. This in turns causes slow mixing of the number of occupied components. To explain this, say that before updating $\bm{d}_{1:n}$ we have $k_n$ groups and let $l = \min\{D_{k_n}\}$ be the minimum of the latest block. Since $l \in D_{k_n}$, for each $j \in \{1,\ldots,k_n-1\}$, must be at least one index $i < l$ such that $i \in D_j$, at all times before updating $d_l$.
Thus by the time we update $d_l$, all the first $k_n-1$ blocks must have at least one recently updated ordered allocation variable $d_i$, with $i \leq l$, assigned to them, which means that they can not be emptied at the current iteration.
(c) From a computational viewpoint it is a drawback that for each index $i \in \{1,\ldots,n\}$, one has to compute the set of admissible moves $\D_i$ at every iteration.

To overcome the first issue, \cite{Deb:Gil:23}  proposed to
perform, at each iteration of the sampler, a uniform random permutation, $\rho$, of the data, obtaining $\bm{y}'_{1:n} = (y'_1,\ldots, y'_n) = (y_{\rho(1)},\ldots,y_{\rho(n)})$. This is valid because $\rho$ does not depend on $\bm{y}_{1:n}$, so $\pi(y_1,\ldots,y_n) = \pi(y_{\rho(1)},\ldots,y_{\rho(n)})$ by exchangeability.
In accordance, one relabels the ordered allocation variables, $\bm{d}'_{1:n} = (d'_1,\ldots,d'_n)$, so to preserve the same partition of the data, and that now the least element order of the blocks holds for $D'_j = \{i: d'_i = j\}$. We must rearrange as well the observed component parameters and weights, say $\bm{\tx}'_{1:k_n} = (\tx'_1,\ldots,\tx'_{k_n})$ and $\bm{\tp}'_{1:k_n} = (\tp'_1,\ldots,\tp'_{k_n})$, in the order in which they are discovered by $\bm{y}'_{1:n}$. In particular, if $\bm{\tp}$ is being modelled indirectly through $\bm{p}$ and $\bm{\alpha}$, it is enough to set $\tp'_j = p_{\alpha'_j}$ where $\bm{\alpha}'_{1:k_n} = (\alpha'_1,\ldots,\alpha'_{k_n})$, indicates which weight in $\bm{p}$ was the $j$th one discovered by $\bm{y}'_{1:n}$. While this acceleration step solves the first issue, significantly improving the mixing of the OAS, it does not tackle the other ones.

\section{Efficient ordered allocation sampler}\label{sec:EOAS}

Here we propose a different way of updating the ordered allocation variables that effectively tackles the three issues pointed out in the previous section.
The key idea is to exploit a permutation of the data each time a $d_i$ is updated instead of permuting the data once per iteration.
However, as we will see, in practice it is not necessary to permute the data every time: it is sufficient to update $d_i$ as if $y_i$ was the last observation, in a manner reminiscent of marginal samplers.


Specifically, before updating each ordered allocation variable, we apply the permutation that sends the first element to the last place, this is the permutation $\rho = (2,\ldots,n,1)$, and relabel every other random term accordingly. By doing so sequentially, instead of updating $d_i$ corresponding to $\bm{y}_{1:n}$, we update $d'_n$ corresponding to $\bm{y}'_{1:n} = (y_{i+1},\ldots,y_n,y_1,\ldots,y_i)$. Notice that both $d_i$ and $d'_n$ indicate  the mixture component to which $y_i = y'_n$ is associated. It follows easily from \eqref{eq:d_post_prev} that
\begin{equation}\label{eq:d'_post0}
  \Prob(\,d'_n = d\mid \rest\,) \propto \begin{cases}
\tp'_d\, g(y'_n\mid \tx'_d) & \text{ for } d \in \{1,\ldots, k\},\\
\l(1-\sum_{l=1}^{k}\tp'_l\r)g(y'_n\mid \tx'_d) & \text{ for } d = k+1,
\end{cases}
\end{equation}
Here $k =\max\{\bm{d}'_{1:n-1}\}$, where $\bm{d}'_{1:n-1}$ are the ordered allocation variables corresponding to $\bm{y}'_{1:n-1}=(y_{i+1},\ldots,y_n,y_1,\ldots,y_{i-1})$. Accordingly, $\bm{\tx}'_{1:k} = (\tx'_1,\ldots,\tx'_k)$, and $\bm{\tp}'_{1:k} = (\tp'_1,\ldots,\tp'_k)$ are the observed component parameters and weights in the order in which they are discovered by $\bm{y}'_{1:n-1}$. With respect to the current rearrangement, $\tx'_{k+1}$ and $\tp'_{k+1}$ refer to the next component parameter and weight to be discovered.
Note that in order to sample $d'_n$ from \eqref{eq:d'_post0} we require to condition on the values of  $\bm{\tx}'_{1:k}$,  $\bm{\tp}'_{1:k}$ and well as $\tx'_{k+1}$ but not on $\tp'_{k+1}$. Being that unobserved component parameters are updated from the prior, cf. \eqref{eq:x_post} in Appendix \ref{app:OAS}, at this stage we can simply sample $\tx'_{k+1} \sim \nu$ before updating $d'_n$. 
Now, after updating $d'_n$, two scenarios arise.
In the first scenario, $d'_n \leq k$, this means that $y'_n = y_i$ is assigned to an already observed component and we can simply dismiss $\tx'_{k+1}$ as it remains undiscovered.
Otherwise, $d'_n = k+1$ meaning that $y'_n = y_i$ is assigned to a new component with parameter $\tx'_{k+1}$, not discovered by $\bm{y}'_{1:n-1}$, in this case we must also update the corresponding weight $\tp'_{k+1}$.
Now, $\bm{\tp}' = (\tp'_j)_{j=1}^{\infty}$ are the weights in order of appearance with respect to the permutation of an exchangeable sequence, hence $\bm{\tp}'$ remains ISBP, i.e. $\pi(\bm{\tp}') = \pi(\bm{\tp})$. At this stage we further have $n'_{k+1} = |\{i: d'_n = k+1\}| = 1$, hence, it follows from \eqref{eq:like_sb} and \eqref{eq:prior_sb} that the full conditional of $\tp'_{k+1}$ depends only on $\bm{\tp}'_{1:k}$, 
\begin{equation}\label{eq:p'_k+1}
  \pi(\tp'_{k+1}\mid \cdots) = \pi(\tp'_{k+1}\mid\bm{\tp}'_{1:k}),
\end{equation}
and that this conditional distribution corresponds to the prior law $\pi(\bm{\tp'}) =\pi(\bm{\tp})$ of the weights in size-biased order. For illustration, in the Pitman-Yor case, cf. centered equation after \eqref{eq:prior_sb}, one can sample $\tp'_{k+1}$ from \eqref{eq:p'_k+1} by setting
$$
\tp'_{k+1} = v_{k+1}\Big(1-\sum\nolimits_{j=1}^{k}\tp'_j\Big)$$
where $v_{k+1} \sim \Be(1-\sigma,\beta+(k+1)\sigma)$, also see \eqref{eq:v_post_sb} in Appendix \ref{app:OAS}.

The main advantage of updating the component to which $y_i$ is allocated through $d'_n$ as in \eqref{eq:d'_post0} instead of sampling $d_i$ from \eqref{eq:d_post_prev} is that the admissible moves for $d'_n$ are precisely all observed components plus a new one, whilst the ones for $d_i$ are potentially much less. Seemingly, the disadvantage with respect to the original OAS is that instead of permuting the data set once per iteration in the acceleration step, we permute data points and relabel the rest of the random elements each time we update an allocation variable, thus $n$ times for each iteration. However, in practice it is not necessary to do so even once. To explain this, notice that \eqref{eq:d'_post0} is invariant under permutations of the pairs $(\tx'_1,\tp'_1),\ldots(\tx'_k,\tp'_k)$. This is, if $\bm{x}^*_{1:k} = (x^*_1,\ldots,x^*_k)$ 
is any rearrangement of $\bm{\tx}'_{1:k}$ and $\bm{p}^*_{1:k} = (p^*_1,\ldots,p^*_k)$ 
is the correspondent relabelling of $\bm{\tp}'_{1:k}$, then sampling $d'_n$ from \eqref{eq:d'_post0} is equivalent to sample $c_i$ from
\begin{equation}\label{eq:c_post0}
  \Prob(\,c_i = c\mid \rest\,) \propto \begin{cases}
p^*_c\, g(y_i\mid x^*_c) & \text{ if } c \in \{1,\ldots,k\},\\
\l(1-\sum_{l=1}^{k}p^*_l\r) g(y_i\mid x^*_c)& \text{ if } c = k+1,
\end{cases}
\end{equation}
where $x^*_{k+1} \sim \nu$, and later set 
$d'_{n} = k+1$ if $c_i = k+1$ or set $d'_n = j$ when $y_i = y'_n$ has been associated to the $j$th component in order of appearance in $\bm{y}'_{1:n}$, i.e. $x^*_{c_i} = \tx'_j$ and $p^*_{c_i} = \tp'_j$, for some $j \leq k$.
Furthermore, we also note that since the weights in order of appearance are ISBP, the conditional distribution in \eqref{eq:p'_k+1} is symmetric in $\bm{\tp}'_{1:k} = (\tp'_1,\ldots,\tp'_k)$, see Proposition \ref{prop:p_k+1} for a proof, and the Pitman-Yor case detailed before for an example.
Thus, whenever we require to sample the weight, $\tp'_{k+1}$, of a newly discovered component, we can sample it from \eqref{eq:p'_k+1}
conditioning on the permutation $\bm{p}^*_{1:k} = (p^*_1,\ldots,p^*_{k})$ of $\bm{\tp}'_{1:k} = (\tp'_1,\ldots,\tp'_k)$.

With these considerations, we get that as long as $\bm{x}^*_{1:k}$ and $\bm{p}^*_{1:k}$ are the component parameters and weights discovered by $\bm{y}_{-i} = (y_1,\ldots,y_{i-1},y_{i+1},\ldots,y_n)$, the order in which they were discovered is irrelevant for both, the updating of the component to which $y_i$ belongs (now indicated by $c_i$), and the updating of the newly discovered component parameters and weights (henceforth denoted by $x^*_{k+1}$ and $p^*_{k+1}$ for consistency). This in turn yields that instead of permuting data points and relabel in order of appearance each time an ordered allocation variable is  updated, we can update the whole collection $\bm{d}_{1:n}$ via \textit{unordered} allocation variables $\bm{c}_{1:n} = (c_1,\ldots,c_n)$, by sequentially sampling from \eqref{eq:c_post0}. In other words, we first update the clustering of data points through $\bm{c}_{1:n}$, once this collection is fully updated, we obtain $\bm{d}_{1:n}$ by defining $d_i = j$ if and only if $c_i$ coincides with $j$th distinct value to appear in $\bm{c}_{1:n}$, and proceed with the OAS as derived by \cite{Deb:Gil:23}. See Algorithm \ref{alg:EOAS_PY} for the Pitman-Yor prior case. 
It is very important to highlight, however, that one cannot simply replace the weights in order of appearance with the weights in any arbitrary ordering tout court.
Indeed, elements of $\bm{p}^*_{1:k}$ have to be the first $k$ weights to be discovered (in some ordering) by
$\bm{y}'_{1:n-1}=(y_{i+1},\ldots,y_n,y_1,\ldots,y_{i-1})$,
thus these are the first $k$  entries of a sequence, $\bm{\tp}'$ that a priori is ISBP. In particular this guarantees that the conditional distribution in \eqref{eq:p'_k+1} is symmetric in $\bm{p}'_{1:k}=(\tp_1',\ldots,\tp'_k)$ and thus we can equivalently condition on the relabelling $\bm{p}^*_{1:k} =  (p^*_1,\ldots,p^*_{k})$ for the updating of $p^*_{k+1}$. In general this is not true for weights that are not ISBP a priori.

\begin{algorithm}[tb]
\caption{OAS - Pitman-Yor case}
\label{alg:EOAS_PY}
\begin{small}
Let $\bm{d}_{1:n}$ be the current values of the ordered allocation variables, $k_n = \max\{\bm{d}_{1:n}\}$ and $D_j = \{i \leq n: d_i = j\}$ for $j\leq k_n$. Also let
$\bm{\tx}_{1:k_n}$, $\bm{\tp}_{1:k_n}$ be the component parameters and weights in order of appearance in $\bm{y}_{1:n}$ through $\bm{d}_{1:n}$.
\begin{enumerate}
\item 
Set $c_i = d_i$, for $i\in \{1,\ldots, n\}$, and set $x^*_j = \tx_j$, $p^*_j = \tp_j$ and $C_j = D_j$ for $j \leq k_n$. Repeat the following steps for $i\in \{1,\ldots,n\}$:
\begin{enumerate}[(i)]
\item If $C_{c_i} = \{i\}$, set $k = k_n-1$, for each $l$ such that $c_l > c_i$ redefine $c_l = c_l-1$, delete $x^*_{c_i}$ and $p^*_{c_i}$, and set $x^*_j = x^{*}_{j+1}$ and $p^*_j = p^*_{j+1}$, for $c_i \leq j \leq k$. Otherwise, simply set $k = k_n$. In any case, $\bm{x}^*_{1:{k}} = (x^*_1,\ldots,x^*_{k})$ and $\bm{p}^*_{1:{k}} = (p^*_1,\ldots,p^*_{k})$ are the component parameters and weights discovered by $\bm{y}_{-i} = (y_1,\ldots,y_{i-1},y_{i+1},\ldots,y_n)$, and $\bm{c}_{-i} = (c_1,\ldots,c_{i-1},c_{i+1},\ldots,c_n)$ are the corresponding allocation variables.
\item Sample $x^*_{k+1} \sim \nu$ and $c_i$ from \eqref{eq:c_post0}. If $c_i = k+1$, set $k_n = k+1$, sample $v_{k+1} \sim\Be(1-\sigma,\beta+(k+1)\sigma)$, and set $p^*_{k+1} = v\big(1-\sum_{l=1}^{k}p^*_l\big)$. If $c_i \leq k$, set $k_n = k$ and dismiss $x^*_{k+1}$.
\end{enumerate}
For  the distinct values,  $\sigma_1,\ldots,\sigma_{k_n}$, of $\bm{c}_{1:n}$ in order of appearance, set $d_i = j$ when $c_i = \sigma_j$, for $i\in \{1,\ldots,n\}$.
Also set $D_j = \{i \leq n: d_i = j\}$ and $n_j = |D_j|$ for $j\in \{1,\ldots,k_n\}$.
\item 
Sample  $\bm{\tx}_{1:k_n}$ from  $\pi(\,\bm{\tx}_{1:k_n}\mid \cdots) \propto \prod_{j = 1}^{k_n}\prod_{i \in D_j} g(y_i\mid \tx_j) \times \nu(\tx_j)$, cf. \eqref{eq:x_post}.
\item Sample  $\bm{\tp}_{1:k_n}$  from $\pi(\,\bm{\tp}_{1:k_n}\mid\rest) \propto \prod_{j=1}^{k_n}\tp_j^{n_j-1}
  \big(1-\sum_{l=1}^{j-1}\tp_l\big)\pi(\bm{\tp}_{1:k_n})$, cf. \eqref{eq:p_post}. That is, sample $v_j \ind  \mathsf{Be}\big(n_j-\sigma,\sum_{l>j}n_l+\beta+j\sigma\big)$ and set $\tp_j = v_j\prod_{l=1}^{j-1}(1-v_l)$, for $j=1,\ldots, k_n$, see \eqref{eq:v_post_sb}.
\end{enumerate}
\end{small}
\end{algorithm}

Comparing Algorithm \ref{alg:EOAS_PY} with the marginal algorithm \citep[cf.][Algorithm 8]{Nea:00}, there are only a few differences.
First, in the marginal algorithm, the updating of $c_i$ requires the marginal predictive distribution $\Prob(\theta_{n} \in \cdot\mid \bm{\theta}_{1:n-1})$ (equivalent to $\Prob(c_{n} \in \cdot\mid \bm{c}_{1:n-1})$).
Specifically, for Pitman-Yor models, one has
\begin{equation}\label{eq:marg_pred}
\Prob(\theta_{n} \in \cdot\mid \bm{\theta}_{1:n-1}) = \sum_{j=1}^{k_{n-1}}\frac{m_j-\sigma}{\beta +n}\delta_{x^*_j} + \frac{\beta + k\sigma}{\beta + n}\nu,
\end{equation}
where $x^*_1,\ldots,x^*_{k_{n-1}}$ are the distinct values that $\bm{\theta}_{1:n-1}$ exhibits (in no particular order) and $m_j = |\{i \leq n-1: m_j = x^*_j\}|$. This in turn yields
\begin{equation}\label{eq:c_post0_mar}
  \Prob(\,c_i = c\mid \rest\,) \propto 
  \begin{cases}
  (m_c-\sigma)\, g(y_i\mid x^*_c) & \text{ if } c \in \{1,\ldots,k\},\\
  (\beta+ k\sigma)g(y_i\mid x^*_c)& \text{ if } c = k+1,
  \end{cases}
\end{equation}
where $\bm{x}^*_{1:k} = (x^*_1,\ldots,x^*_k)$ are the observed component parameters, $x^*_{k+1} \sim \nu$, $c_i = c$ indicates that the $i$th data point, $y_i$, is associated to the component parameter $x^*_c$, and $m_c = |\{l \neq i: c_l = c\}|$ is the number observations in $\bm{y}_{-i} = (y_1,\ldots,y_i,y_{i+1},y_n)$ currently associated to $x^*_c$. In the OAS, by including the weights, we replace $(m_c-\sigma)$ and $(\beta+k\sigma)$  by $p^*_c$ and $1-\sum_{j=1}^kp^*_j$, respectively, cf. \eqref{eq:c_post0}. Also compare \emph{iii} in Theorem \ref{theo:rep_sss} with \eqref{eq:marg_pred}.
Secondly, in comparison to marginal samplers, the OAS features an additional step where the observed weights, $\bm{\tp}_{1:k_n}$, are updated. Recall that in marginal samplers these have been integrated out. In particular, when a new component is discovered, in the OAS we must sample the corresponding weight $p^*_{k+1}$. In marginal samplers this is not necessary, instead we simply set $m_{k+1} = 1$ when we update $c_i$ successively.


Despite the similarities, the new OAS remains a conditional sampler, namely it evolves in a larger state space when compared to marginal samplers due to the inclusion of observed weights. Hence, it is expected that marginal samplers will have an edge in terms of mixing properties. The OAS, in turn, is expected to perform better than other conditional samplers since it now mixes in the space of partitions. The most important advantage of the OAS over marginal samplers is that it can be adapted to a wider range of mixing priors, in that it does not require the predictive distribution $\Prob(\theta_{n} \in \cdot\mid \bm{\theta}_{1:n-1})$ in closed form. In this it shares the same wide applicability of other conditional sampler such as the slice sampler. In general the availability of the predictive distribution $\Prob(\theta_{n} \in \cdot\mid \bm{\theta}_{1:n-1})$ in closed form corresponds to the availability of the law of the weights in size-biased order. So a key feature of original OAS consists in modeling $\bm{\tp}_{1:k_n}$ via the weights in any arbitrary order, $\bm{p}$ and the indexes, $\bm{\alpha}_{1:k_n} = (\alpha_1,\ldots,\alpha_{k_n})$ satisfying \eqref{eq:size-biased_pick}, that indicate which elements of $\bm{p}$ are first ones to be discovered by $\bm{y}_{1:n}$. In particular, the weights $\bm{\tp}'_{1:k}$ in \eqref{eq:d'_post0} will be given by $\tp'_j = p_{\alpha'_j}$ where the rearranged indexes $\bm{\alpha}'_{1:k} = (\alpha'_1,\ldots,\alpha'_k)$ now specify which weights in $\bm{p}$ are the first ones to be discovered by $\bm{y}'_{1:n-1}$. Note that because data is exchangeable, the prior law of $\bm{\alpha}'_{1:k} = (\alpha'_1,\ldots,\alpha'_k)$
coincides with that in \eqref{eq:size-biased_pick}. In this context, any permutation say $\bm{p}^*_{1:k}$ of $\bm{\tp}'_{1:k}$ can be used by considering the respective relabelling $\bm{\alpha}^*_{1:k} = (\alpha^*_1,\ldots,\alpha^*_k)$ of  $\bm{\alpha}'_{1:k}$ and setting $p^*_j = p_{\alpha^*_j}$. As for the updating of a newly discovered weight, this can be done by defining $p^*_{k+1} = p_{\alpha^*_{k+1}}$ where $\alpha^*_{k+1}$ is sampled from  \eqref{eq:size-biased_pick} conditioning on $\bm{\alpha}^*_{1:k}$, and all the required weights up to $p_{\alpha^*_{k+1}}$ are sampled from the appropriate conditional distribution corresponding to the prior law $\pi(\bm{p})$. This is detailed in Appendices \ref{app:OAS} and \ref{app:SBW}, see in particular Proposition \ref{prop:pa_k+1} and Remark \ref{rem:pa_k+1} therein.
In Algorithm \ref{alg:EOAS} we describe one iteration of the most general version of the OAS.

\begin{algorithm}[tb]
\caption{OAS - general case}\label{alg:EOAS}
\begin{small}
Let $\bm{d}_{1:n}$ be the current values of the ordered allocation variables, $k_n = \max\{\bm{d}_{1:n}\}$ and $D_j = \{i \leq n: d_i = j\}$ for $j\leq k_n$. Also let
$\bm{\tx}_{1:k_n}$, $\bm{\tp}_{1:k_n}$ be the component parameters and weights in order of appearance in $\bm{y}_{1:n}$ through $\bm{d}_{1:n}$, respectively. Here, $\bm{\tp}_{1:k_n}$ is determined by $\bm{p}_{1:\overline{\bm{\alpha}}_{k_n}}$ and $\bm{\alpha}_{1:k_n}$ through $\tp_j = p_{\alpha_j}$, where $\overline{\bm{\alpha}}_{k_n} = \max\{\bm{\alpha}_{1:k_n}\}$.
\begin{enumerate}
\item Update $\bm{d}_{1:n}$ via $\bm{c}_{1:n}$. Set $c_i = d_i$, for $i \leq n$, and set $x^*_j = \tx_j$, $\alpha^*_j = \alpha_j$ and $C_j = D_j$ for $j \leq k_n$. Repeat the following steps for $i \in \{1,\ldots,n\}$:
\begin{enumerate}[(i)]
\item If $C_{c_i} = \{i\}$, set $k = k_n-1$, for each $l$ such that $c_l > c_i$ redefine $c_l = c_l-1$, delete $x^*_{c_i}$ and $\alpha^*_{c_i}$, set $x^*_j = x^{*}_{j+1}$ and $\alpha^*_j = \alpha^*_{j+1}$, for $c_i \leq j \leq k$, and delete $p_j$ with $j > \max\{\bm{\alpha}^*_{1:k}\}$. Otherwise, simply set $k = k_n$. In any case, $\bm{x}^*_{1:{k}} = (x^*_1,\ldots,x^*_{k})$ and $\bm{p}^*_{1:{k}} = (p^*_1,\ldots,p^*_{k})$, with $p^*_j = p_{\alpha^*_j}$, are the component parameters and weights discovered by $\bm{y}_{-i} = (y_1,\ldots,y_{i-1},y_{i+1},\ldots,y_n)$, and $\bm{c}_{-i} = (c_1,\ldots,c_{i-1},c_{i+1},\ldots,c_n)$ are the corresponding allocation variables.
\item Sample $x^*_{k+1} \sim \nu$ and $c_i$ from \eqref{eq:c_post0}.
If $c_i = k+1$, sample $\alpha^*_{k+1}$ from
$
\pi(\alpha^*_{k+1}\mid \bm{\alpha}^{*}_{1:k},\bm{p}) \propto p_{\alpha^*_{k+1}}\Ind_{\{\alpha^*_{k+1} \not\in \bm{\alpha}^{*}_{1:k}\}}
$
as in \eqref{eq:size-biased_pick}, as well as all the required weights up to $p_{\alpha^*_{k+1}}$, if any, from the appropriate conditional distribution corresponding to the prior law $\pi(\bm{p})$ (cf. \eqref{eq:post_p_a}, Proposition \ref{prop:pa_k+1} and Remark \ref{rem:pa_k+1} in Appendices \ref{app:OAS} and \ref{app:SBW}). Also set $k_n = k+1$.
Otherwise, if $c_i \leq k$, set $k_n = k$ and dismiss $x^*_{k+1}$.
\end{enumerate}
 For  the distinct values,  $\sigma_1,\ldots,\sigma_{k_n}$, of $\bm{c}_{1:n}$ in order of appearance, set $d_i = j$ when $c_i = \sigma_j$, for $i\in \{1,\ldots,n\}$. Then set $\alpha_j = \alpha^*_{\sigma_j}$, $D_j = \{i \leq n: d_i = j\}$ and $n_j = |D_j|$ for $j\in \{1,\ldots,k_n\}$.
\item Sample $\bm{\tx}_{1:k_n}$ from  $\pi(\,\bm{\tx}_{1:k_n}\mid \cdots) \propto \prod_{j = 1}^{k_n}\prod_{i \in D_j} g(y_i\mid \tx_j) \times \nu(\tx_j)$, cf. \eqref{eq:x_post}.
\item Update $\bm{\tp}_{1:k_n}$ via  $\bm{\alpha}_{1:k_n}$ and $\bm{p}_{1:\overline{\bm{\alpha}}_{k_n}}$:
\begin{itemize}
\item[(i)] Sample $\rho = (\rho(1),\ldots,\rho(k_n))$ from
$
\pi(\rho) \propto \prod_{j=1}^{k_n}w_{j,\rho(j)}\Ind_{\{\rho \in \mathcal{S}_{k_n}\}},
$
where $w_{j,l} = p_{\alpha_l}^{n_j}$ and $\mathcal{S}_{k_n}$ is the space of all permutations of $\{1,\ldots,k_n\}$, cf. \eqref{eq:pi_rho} in Appendix \ref{app:OAS}. Later, apply $\rho$ to the indexes of $\bm{\alpha}_{1:k_n}$ obtaining the updated value $(\alpha_{\rho(1)},\ldots,\alpha_{\rho(k_n)})$ of $\bm{\alpha}_{1:k_n}$.
\item[(i*)] Optionally, for $j \leq k_n$, sample $\alpha_j$ from
$
q(\alpha_j) \propto p^{n_j}_{\alpha_j}\Ind_{\{\alpha_j \neq \alpha_l, \, \forall\, l \leq k_n, l \neq j\}}
$
as in \eqref{eq:a_j_acc_step}. To do so, perform Algorithm \ref{alg:alpha} in Appendix \ref{app:OAS}.
\item[(ii)] Sample  $\bm{p}_{1:\overline{\bm{\alpha}}_{k_n}}$ from $\pi(\,\bm{p}_{1:\overline{\bm{\alpha}}_{k_n}}\mid \cdots) \propto \prod_{j=1}^{\overline{\bm{\alpha}}_{k_n}}p_{j}^{r_j} \pi(\bm{p}_{1:\overline{\bm{\alpha}}_{k_n}})$, with $r_j = \sum_{l=1}^{k_n}n_l\Ind_{\{\alpha_l = j\}}$, cf. \eqref{eq:post_p}.
\end{itemize}
\end{enumerate}
\end{small}
\end{algorithm}

We discuss next the three issues raised in Section \ref{sec:OAS}. 
(a) As we have discussed, updating $d_i$ via sampling $c_i$ from \eqref{eq:c_post0} is equivalent to permute the dataset obtaining $\bm{y}'_{1:n}$, with $y'_n = y_i$, relabel all random elements in order of appearance, and update $d'_{n}$ from \eqref{eq:d'_post0}. Thus we have that the initial ordering of data points should not affect the mixing properties of the new OAS and this version of the algorithm does not requires the acceleration step in \cite{Deb:Gil:23}.
(b) In the new variant of the OAS,  when updating allocation variables, any data point $y_i$ can either be attached to a discovered component or a new one exactly as in marginal samplers. Hence any number of components can be created or destroyed at each iteration, and thus, in terms of the mixing of $k_n$ it is expected that the sampler performs better than the original version in \cite{Deb:Gil:23}.
(c) Since $\bm{d}_{1:n}$ are now updated via unordered allocation variables $\bm{c}_{1:n}$, it is no longer necessary to compute the admissible moves $\D_i$ for $d_i$. This makes the sampler a lot easier to program and faster to run when compared to the the original version in \cite{Deb:Gil:23}.

\section{Empirical comparison of Gibbs samplers}

In this section we run a simulation study to test the performances of the OAS with four different species sampling mixing priors and four synthetic datasets. We compare the new version of the OAS with its original version in \cite{Deb:Gil:23}, together with a marginal and a conditional algorithm. 
For quick reference, in this section we refer to the original version as oOAS, keeping OAS for the new version. The first dataset is the \texttt{galaxy} data which includes the velocities of 82 galaxies diverging away from our galaxy. Each of the remaining three datasets consist of $100$ draws from a mixture of Gaussian distributions. 
The \texttt{leptokurtic} data was generated from the mixture $0.67\mathsf{N}(0,1) + 0.33\mathsf{N}(0,0.25^2)$, the \texttt{bimodal} data from $0.5\mathsf{N}(-1,0.5^2) + 0.5\mathsf{N}(1,0.5^2)$. Finally for the \texttt{mix} data we considered a mixture of a leptokurtic and a bimodal distribution, namely $0.5[0.6\mathsf{N}(-2,1.25^2) + 0.4\mathsf{N}(-2,0.25^2)] + 0.5[0.5\mathsf{N}(1.5,0.75^2) + 0.5\mathsf{N}(4,0.75^2)]$. The mixing priors were chosen so that two of them have tractable size-biased ordered weights, $\bm{\tp}$, these are a Dirichlet process (DP) and a Pitman-Yor process (PY). The other two mixing priors are (an instance of) the Exchangeable stick-breaking process \citep[ESB][]{Gil:Men:21} and the Geometric process \citep[GP][]{Fue:etal:10}. These entail weights $\bm{p}$ that are not in size-biased order.
In all cases the mixing kernel in \eqref{eq:mix} is Gaussian, $g(y\mid x) = \mathsf{N}(y \mid \mu,\tau^{-1})$ for $x=(\mu,\tau)$, and to attain conjugacy between $g(y\mid \mu,\tau)$ and the base measure $\nu(\ddr \mu,\ddr \tau)$, we set the latter to be the Gaussian--Gamma distribution 
  $\mathsf{N}(\mu \mid \mu_0,(\lambda_0\tau)^{-1})\mathsf{Ga}(\tau\mid a_0,b_0)$. 
The hyperparameters were fixed to $\mu_0 = n^{-1}\sum_{i=1}^{n}y_i$, $\lambda_0 = 1/100$, and $a_0 = b_0 = 0.5$.

The marginal Algorithm 8 of \cite{Nea:00} was used for the DP and PY priors. No marginal sampler is available for the ESB and GP priors since they do not have a tractable predictive distribution $\Prob(\theta_{n} \in \cdot\mid \bm{\theta}_{1:n-1})$. For the conditional sampler, we implemented the (dependent) slice-efficient sampler of \cite{Kall:etal:11} for all four mixing priors. As for the oOAS and the OAS, we choose the versions that model directly size-biased ordered weights, $\bm{\tp}$, for the DP and the PY, the versions that model $\bm{\tp}$ via $\bm{p}$ and $\bm{\alpha}$ for the ESB and the GP. All samplers were ran for $N = 2\times 10 ^6$ iterations, after a burn-in of $B = 10^5$ iterations. To monitor the algorithmic performance we considered the \emph{integrated autocorrelation time} \citep[IAT, ][]{Sok:97} $\tau = 1/2 + \sum_{l=1}^{\infty} \rho_l$, of the number of occupied components, $k_n$, and the deviance of the density, $D_v = -2\sum_{i=1}^n\log \sum_{j=1}^{\infty} \frac{n_j}{n} g(y_i\mid x_j)$. Here $n_j$ is the number of data points associated to $g(y_i\mid x_j)$, and $\rho_l$ refers to the $l$--lag autocorrelation of the monitored chain. As explained by \cite{Sok:97}, if $\hat{f}$ is the Monte Carlo estimator of a quantity of interest, $f$, then $Var(\hat{f}) \approx 2 \tau \times Var(f)N^{-1}$, where $Var(f)$ refers to the marginal variance of $f$. Hence smaller values of $\tau$ correspond to more efficient samplers. As done by \cite{Kall:etal:11} and \cite{Deb:Gil:23}, the IAT is computed as $\hat{\tau} = 1/2 +\sum_{l=1}^{C-1} \hat{\rho}_l$, where $\hat{\rho}_l$ is the estimated autocorrelation at lag $l$ and $C = \min\{l: \hat{\rho}_l < 2/\sqrt{N}\}$. The results are reported in Table \ref{tab:IAT}.

\begin{table}
\begin{scriptsize}
\centering
\begin{tabular}{|c | C{1.2cm}C{1.2cm} | C{1.2cm}C{1.2cm} | C{1.2cm}C{1.2cm} | C{1.2cm}C{1.3cm}|}
\multicolumn{1}{c}{\texttt{galaxy}}
& \multicolumn{2}{c}{DP}
& \multicolumn{2}{c}{PY}
& \multicolumn{2}{c}{ESB}
& \multicolumn{2}{c}{GP} \\ \hline
& $D_v$ & $k_n$ & $D_v$ & $k_n$ & $D_v$ & $k_n$ & $D_v$ & $k_n$ \\ \hline
Mar & 12.44(0.25) & 14.08(0.34) & 11.42(0.19) & 10.60(0.18) &  --- & --- &  --- & ---  \\
OAS & 19.43(0.49) & 22.55(0.60) & 17.86(0.37) & 16.77(0.37) & 22.99(0.64) & 31.95(1.01) & 17.89(0.59) & 29.71(0.90) \\
oOAS & 25.24(0.66) & 32.69(0.80) & 17.18(0.42) & 34.25(0.90) & 25.65(0.83) & 63.54(2.59) & 18.21(0.70) & 63.62(1.94)\\
Cond. & 130.42(8.7) & 195.5(13.2) & 69.46(3.85) & 99.15(5.27) & 295.7(34.7) & 556.3(62.2) & 97.48(12.9) & 17080(8684) \\ \hline
\end{tabular}

\vspace{0.5cm}

\begin{tabular}{|c | C{1.2cm}C{1.2cm} | C{1.2cm}C{1.2cm} | C{1.2cm}C{1.2cm} | C{1.2cm}C{1.2cm}|}
\multicolumn{1}{c}{\texttt{lepto}}
& \multicolumn{2}{c}{DP}
& \multicolumn{2}{c}{PY}
& \multicolumn{2}{c}{ESB}
& \multicolumn{2}{c}{GP} \\ \hline
& $D_v$ & $k_n$ & $D_v$ & $k_n$ & $D_v$ & $k_n$ & $D_v$ & $k_n$ \\ \hline
Mar & 11.74(0.18) & 6.89(0.08) & 14.46(0.33) & 4.40(0.04) &  --- & --- &  --- & ---  \\
OAS & 14.59(0.28) & 10.50(0.14) & 17.67(0.34) & 6.77(0.08) & 31.23(0.80) & 13.07(0.31) & 10.56(0.15) & 12.32(0.22) \\
oOAS & 14.13(0.23) & 14.72(0.26) & 16.34(0.30) & 12.57(0.18) & 31.23(0.75) & 16.40(0.37) & 13.08(0.22) & 17.50(0.35) \\
Cond. & 57.62(4.13) & 52.97(0.28) & 88.85(8.60) & 48.79(2.30) & 264.0(34.7) & 60.51(3.82) & 43.55(1.36) & 50.74(1.56) \\ \hline
\end{tabular}

\vspace{0.5cm}

\begin{tabular}{|c | C{1.2cm}C{1.2cm} | C{1.2cm}C{1.2cm} | C{1.2cm}C{1.2cm} | C{1.2cm}C{1.2cm}|}
\multicolumn{1}{c}{\texttt{bimod}}
& \multicolumn{2}{c}{DP}
& \multicolumn{2}{c}{PY}
& \multicolumn{2}{c}{ESB}
& \multicolumn{2}{c}{GP} \\ \hline
& $D_v$ & $k_n$ & $D_v$ & $k_n$ & $D_v$ & $k_n$ & $D_v$ & $k_n$ \\ \hline
Mar & 10.51(0.23) & 5.99(0.08) & 27.43(0.88) & 4.67(0.09) &  --- & --- &  --- & ---  \\
OAS & 11.74(0.30) & 9.82(0.18) & 43.44(2.20) & 6.63(0.13) & 34.62(1.11) & 17.42(0.69) & 69.67(2.07) & 57.46(1.68) \\
oOAS & 12.20(0.38) & 12.18(0.22) & 47.95(2.54) & 10.65(0.22) & 43.03(1.78) & 17.60(0.57) & 91.34(2.90) & 78.34(2.44) \\
Cond. & 36.24(1.85) & 66.13(3.80) & 126.6(8.56) & 56.47(3.49) & 66.41(6.13) & 44.65(2.10) & 203.1(10.2) & 161.3(8.00) \\ \hline
\end{tabular}

\vspace{0.5cm}

\begin{tabular}{|c | C{1.2cm}C{1.2cm} | C{1.2cm}C{1.2cm} | C{1.2cm}C{1.2cm} | C{1.3cm}C{1.3cm}|}
\multicolumn{1}{c}{\texttt{mix}}
& \multicolumn{2}{c}{DP}
& \multicolumn{2}{c}{PY}
& \multicolumn{2}{c}{ESB}
& \multicolumn{2}{c}{GP} \\ \hline
& $D_v$ & $k_n$ & $D_v$ & $k_n$ & $D_v$ & $k_n$ & $D_v$ & $k_n$ \\ \hline
Mar & 22.32(0.47) & 15.20(0.31) & 22.07(0.52) & 12.88(0.30) &  --- & --- &  --- & ---  \\
OAS & 31.19(0.78) & 24.06(0.59) & 31.52(0.88) & 19.59(0.50) & 49.73(1.72) & 39.47(1.40) & 49.85(1.92) & 35.10(1.35) \\
oOAS & 36.09(1.11) & 30.59(0.81) & 34.09(0.97) & 31.12(0.83) & 55.72(2.39) & 49.27(1.63) & 54.23(2.37) & 39.69(1.43)\\
Cond. & 97.40(5.36) & 96.96(5.19) & 94.09(5.59) & 71.84(3.04) & 255.0(31.8) & 179.1(18.5) & 27212(15981) & 13872(7951) \\ \hline
\end{tabular}

\caption{IATs (standard errors in parenthesis) for the four dataset by model and sampler.\label{tab:IAT}}
\end{scriptsize}
\end{table}

While the IAT reports the efficacy of an algorithm in terms of mixing properties, it does not take into consideration how long it takes to run. The running times were then recorded and combined with the IAT to provide a measurement of overall efficiency of algorithms. Specifically, we computed
\[
\hat{E}_{\mathrm{sampler}} = \frac{\hat{\tau}_{\mathrm{sampler}}t_{\mathrm{sampler}}}{\hat{\tau}_{\mathrm{OAS}}t_{\mathrm{OAS}}}
\]
where $\hat{\tau}_{\mathrm{sampler}}$ and $t_{\mathrm{sampler}}$ stand for the estimated IAT and the average running time per iteration, respectively, of a given sampler. The quantity $\hat{\tau}_{\mathrm{sampler}}/\hat{\tau}_{\mathrm{OAS}}\times N$ can be interpreted as the number of iterations of a given sampler that are needed to obtain the efficiency of estimation of $N$ iterations of the OAS. Hence, $\hat{E}_{\mathrm{sampler}}$ represents 
the proportion of time, relative to the OAS, that it takes for the chosen sampler to provide an estimate as efficient as the one of the OAS.
In particular, $\hat{E}_{\mathrm{sampler}} < 1$ means that the given sampler was overall more efficient than the OAS, whilst $\hat{E}_{\mathrm{sampler}} > 1$ means that the OAS was more efficient than the given sampler. Table \ref{tab:IAT_R} reports the results. It is important to highlight that the running time might vary depending on the coding of algorithms and the programming language. We used the \texttt{Julia} language and the code is available as supplementary material.

\begin{table}
\begin{scriptsize}
\centering

\begin{tabular}{|c | C{1cm}C{1cm} | C{1cm}C{1cm} | C{1cm}C{1cm} | C{1cm}C{1cm}|}
\multicolumn{1}{c}{\texttt{galaxy}}
& \multicolumn{2}{c}{DP}
& \multicolumn{2}{c}{PY}
& \multicolumn{2}{c}{ESB}
& \multicolumn{2}{c}{GP} \\ \hline
& $D_v$ & $k_n$ & $D_v$ & $k_n$ & $D_v$ & $k_n$ & $D_v$ & $k_n$ \\ \hline
Mar & 0.7288 & 0.7108 & 0.7261 & 0.7177 & --- & --- & --- & ---\\
oOAS & 2.2489 & 2.5097 & 1.7932 & 3.8072 & 2.0021 & 3.5688 & 1.8594 & 3.9118 \\
Cond. & 5.6392 & 7.2836 & 39.610 & 60.217 & 9.3107 & 12.604 & 3.8030 & 401.26 \\ \hline
\end{tabular}

\vspace{0.5cm}

\begin{tabular}{|c | C{1cm}C{1cm} | C{1cm}C{1cm} | C{1cm}C{1cm} | C{1cm}C{1cm}|}
\multicolumn{1}{c}{\texttt{lepto}}
& \multicolumn{2}{c}{DP}
& \multicolumn{2}{c}{PY}
& \multicolumn{2}{c}{ESB}
& \multicolumn{2}{c}{GP} \\ \hline
& $D_v$ & $k_n$ & $D_v$ & $k_n$ & $D_v$ & $k_n$ & $D_v$ & $k_n$ \\ \hline
Mar & 0.8604 & 0.7016 & 0.8414 & 0.6683 & --- & --- & --- & ---\\
oOAS & 1.6747 & 2.4242 & 1.5401 & 3.0923 & 1.6245 & 2.0384 & 2.1398 & 2.4540 \\
Cond. & 3.5216 & 4.4976 & 6.1502 & 8.8148 & 6.6472 & 3.6405 & 2.9918 & 2.9877 \\ \hline
\end{tabular}

\vspace{0.5cm}

\begin{tabular}{|c | C{1cm}C{1cm} | C{1cm}C{1cm} | C{1cm}C{1cm} | C{1cm}C{1cm}|}
\multicolumn{1}{c}{\texttt{bimod}}
& \multicolumn{2}{c}{DP}
& \multicolumn{2}{c}{PY}
& \multicolumn{2}{c}{ESB}
& \multicolumn{2}{c}{GP} \\ \hline
& $D_v$ & $k_n$ & $D_v$ & $k_n$ & $D_v$ & $k_n$ & $D_v$ & $k_n$ \\ \hline
Mar & 0.9283 & 0.6325 & 0.6608 & 0.7371 & --- & --- & --- & ---\\
oOAS & 1.7359 & 2.0719 & 1.8418 & 2.6804 & 1.8488 & 1.5028 & 2.0036 & 2.0836 \\
Cond. & 2.7684 & 6.0395 & 3.3191 & 9.7002 & 1.6078 & 2.1483 & 2.4044 & 2.3154 \\ \hline
\end{tabular}

\vspace{0.5cm}

\begin{tabular}{|c | C{1cm}C{1cm} | C{1cm}C{1cm} | C{1cm}C{1cm} | C{1cm}C{1cm}|}
\multicolumn{1}{c}{\texttt{mix}}
& \multicolumn{2}{c}{DP}
& \multicolumn{2}{c}{PY}
& \multicolumn{2}{c}{ESB}
& \multicolumn{2}{c}{GP} \\ \hline
& $D_v$ & $k_n$ & $D_v$ & $k_n$ & $D_v$ & $k_n$ & $D_v$ & $k_n$ \\ \hline
Mar & 0.7772 & 0.6861 & 0.7787 & 0.7312 & --- & --- & --- & ---\\
oOAS & 1.9317 & 2.1226 & 1.8886 & 2.7741 & 1.7980 & 2.0032 & 1.7570 &  1.8263\\
Cond. & 2.8607 & 3.6917 & 10.018 & 12.307 & 3.9647 & 3.5084 & 362.23 & 262.24 \\ \hline
\end{tabular}

\caption{$\hat{E}_{\mathrm{sampler}}$ for the four datasets  by model and sampler.\label{tab:IAT_R}}
\end{scriptsize}
\end{table}

According to Table \ref{tab:IAT}, the most efficient sampler resulted to be the marginal sampler (when applicable), followed by the OAS and then by the oOAS. Taking into consideration both the IAT and the running times, see Table \ref{tab:IAT_R}, one finds that the OAS compares better to the marginal sampler, when applicable, than the oOAS does to the OAS. That is $\hat{E}_{\mathrm{Mar}}^{-1} < \hat{E}_{\mathrm{oOAS}}$. Note also that the OAS is roughly two times more efficient than the oOAS and that the improvement is more significant in terms of $k_n$. This is explained by the deficiency (b) of the oOAS discussed at the end of Section \ref{sec:OAS}, which does not affect the OAS. It is also worth highlighting that if running times are not considered (Table \ref{tab:IAT}), the difference between the performance of the oOAS and the OAS is more subtle. This is due to the fact that each time an ordered allocation variable is updated, the oOAS requires to compute its set of admissible moves, which is not necessary in the OAS; this makes the latter significantly faster than the former to run. As for the conditional sampler, we observe that its worse overall performance is mainly due to its high IAT. In fact, for the DP, ESB and GP models, the dependent slice-efficient sampler was the fastest among all considered samplers. This is not the case of the PY model, for which the conditional sampler was slower than the marginal sampler and the OAS in all four datasets. In particular, for the \texttt{galaxy} and the \texttt{mix} data we found the slice sampler was even slower than the oOAS. It is worth mentioning that there exist other conditional samplers such as the independent slice-efficient sampler \citep{Kall:etal:11} that have better running times than the dependent slice sampler for PY models, however, this modification comes at the cost of a higher IAT.

\section{Split and merge moves}\label{sec:SMmoves}

Split-merge moves were originally proposed by \cite{Gre:Ric:01} for trans dimensional conditional samplers and later adapted by \cite{Jai:Nea:04,Jai:Nea:07} to marginal samplers with a Dirichlet process prior. These are Metropolis-Hastings moves that are particularly useful when two or more mixture components have similar parameters, causing the Gibbs sampler to get stuck in a local mode in terms of the partition of data points.
Broadly speaking, by proposing splits or merges of blocks of the partition, the Markov chain is able to jump across adjacent high-probability regions instead of having to traverse low-probability regions in between them. 
In this section we exploit the similarities of the new version of OAS with marginal samplers discussed in Section \ref{sec:EOAS} to design split-merge moves that apply to models beyond the Dirichlet or Pitman process priors. We focus first in the case where weights are in size-biased order.

We begin by introducing some notation. Let $\bm{\gamma} = (\bm{d}_{1:n},\bm{\tx}_{1:k_n},\bm{\tp}_{1:k_n})$, be the current state of the chain, thus comprising the ordered allocation variables, and discovered component parameters and weights in order of appearance. 
Also let $\{D_1,\ldots,D_{k_n}\}$ be the partition corresponding to $\bm{d}_{1:n}$, i.e. $D_d = \{i: d_i = d\}$ for $d \in \{1,\ldots,k_n\}$. 
For any pair of distinct observations, $i$ and $j$, we define a \comillas{neighbourhood} of $\bm{\gamma}$, $\mathcal{N}_{i,j,\bm\gamma}$, to be the set of \textit{split} or \textit{merge} states associated to $\bm\gamma$, according to whether $i$ and $j$ are in the same or different blocks of $\{D_1,\ldots,D_{k_n}\}$. When $i$ and $j$ are in the same block, elements of $\mathcal{N}_{i,j,\bm\gamma}$ are termed \emph{splits} of $\bm{\gamma}$ and are obtained by splitting the block containing $i$ and $j$ into two blocks. In the other case, when $i$ and $j$ belong to different blocks, elements of $\mathcal{N}_{i,j,\bm\gamma}$ are termed \emph{merges} of $\bm{\gamma}$ and are attained by merging the blocks with $i$ and $j$.
In these neighbourhoods, component parameters corresponding to non affected blocks are constrained to remain the same as in $\bm{\gamma}$. This is made precise by the following definition.

\begin{definition}\label{def:1}
Let $\bm{\gamma} = (\bm{d}_{1:n},\bm{\tx}_{1:k_n},\bm{\tp}_{1:k_n})$, $i\neq j$, and $S=\{l\neq \{i,j\}:\ d_l=d_i \text{ or }d_l=d_j\}$.
\begin{enumerate}[(a)]
\item
\underline{Split of $\bm{\gamma}$}: If $d_i=d_j$, then
$\bm{\gamma}^* = (\bm{d}^*_{1:n},\bm{\tx}^*_{1:k_n+1},\bm{\tp}^*_{1:k_n+1}) \in \mathcal{N}_{i,j,\bm\gamma}$ when the partition $\{D^*_1,\ldots,D^*_{k_n+1}\}$ induced by $\bm{d}^*_{1:n}$ can be obtained from $\{D_1,\ldots,D_n\}$ by splitting $D_{d_j} = D_{d_i}$ into two blocks, one  containing $i$  and the other one containing $j$, so that $D^*_{d^*_i} \neq D^*_{d^*_j}$ and for every $l\notin S\cup\{i,j\}$, $\tx_{d_{l}^{*}}^{*} = \tx_{d_{l}}$. 
\item 
\underline{Merge of $\bm{\gamma}$}: If $d_i\neq d_j$, then 
$\bm{\gamma}^* 
= (\bm{d}^*_{1:n},\bm{\tx}^*_{1:k_n-1},\bm{\tp}^*_{1:k_n-1}) 
\in \mathcal{N}_{i,j,\bm\gamma}$ when the partition $\{D^*_1,\ldots,D^*_{k_n-1}\}$ induced by $\bm{d}^*_{1:n}$ is the result of merging  the blocks $D_{d_i} \neq D_{d_j}$ into a single block $D^*_{d^*_i} = D^*_{d^*_j}$, and for every $l\notin S\cup\{i,j\}$, $\tx_{d_{l}^{*}}^{*} = \tx_{d_{l}}$.
\end{enumerate}
\end{definition}

Note that the values of all the weights and those parameters corresponding to the affected components are unrestricted in the definition above.  Note also that while a split may cause a change in all allocation variables so to preserve the least element order, there is only one way in which the allocation variables can change after a merge. In both cases $\bm{\gamma}^*\in\mathcal{N}_{i,j,\bm\gamma}$ is of a different dimension with respect to $\bm\gamma$. Finally, the set $S$ is introduced for notational convenience so to refer to observations being clustered either with $i$ or $j$.

Given the current state, $\bm{\gamma}$, a split-merge move is a Metropolis-Hastings step restricted to the neighborhood $\mathcal{N}_{i,j,\bm\gamma}$: a new state $\bm{\gamma}^*$ is generated from a proposal distribution $Q(\cdot \mid \bm{\gamma})$ supported on $\mathcal{N}_{i,j,\bm\gamma}$, and later accepted with probability
\begin{equation}\label{eq:A}
A(\bm{\gamma}^*\mid\bm{\gamma}) = \min\l\{1,\frac{\pi(\bm{\gamma}^*\mid \bm{y}_{1:n})Q(\bm{\gamma}\mid \bm{\gamma}^*)}{\pi(\bm{\gamma}\mid \bm{y}_{1:n})Q(\bm{\gamma}^*\mid \bm{\gamma})}\r\}.
\end{equation}
This way, the resulting kernel
\begin{equation}\label{eq:K}
K(\bm{\gamma}^*\mid \bm{\gamma}) = Q(\bm{\gamma}^*\mid \bm{\gamma}) + \delta_{\bm{\gamma}}(\{\bm{\gamma}^*\}) 
\int [1-A(\bm{\gamma}^*\mid\bm{\gamma})]Q(\ddr\bm{\gamma}^*\mid \bm{\gamma}),
\end{equation}
has the posterior $\pi(\bm{\gamma}\mid \bm{y}_{1:n})$ as invariant distribution. 
Ideally, $Q(\cdot \mid \bm{\gamma})$ should be such that the proposed state, $\bm\gamma^* \in \mathcal{N}_{i,j,\bm{\gamma}}$, has high probability of being accepted when it is reasonable to merge or split the chosen components. 
Following \cite{Jai:Nea:04,Jai:Nea:07}, we will rely on restricted Gibbs sampling scans to define suitable proposals, which are described next.

In our context, by a Gibbs sampling scan restricted to 
$\mathcal{N}_{i,j,\bm\gamma}$ we mean performing a full iteration of the OAS parting from a given state $\bm{\gamma}^{L} \in \mathcal{N}_{i,j,\bm\gamma}$ , called \comillas{launch} state, and restricting the sampled value $\bm{\gamma}^*$ to  $\mathcal{N}_{i,j,\bm\gamma}$.
Specifically, given $\bm{\gamma}^{L} = (\bm{d}_{1:n}^{L}, \bm{\tx}^{L}_{1:k},\bm{\tp}^{L}_{1:k})$, we sample $\bm{\gamma}^*= (\bm{d}^*_{1:n}, \bm{\tx}^*_{1:k},\bm{\tp}^*_{1:k})$ from the conditional distribution
\begin{multline}\label{eq:Gii}
  G_\pier(\bm{\gamma}^* \mid \bm{\gamma}^L) 
  =\\ 
  G_1(\bm{d}^*_{1:n}\mid 
  \bm{d}^L_{1:n},\bm{\tx}^L_{1:k},\bm{\tp}^L_{1:k},\bm{y}_{1:n})\,
  G_2(\bm{\tx}^*_{1:k}\mid \bm{d}^*_{1:n},\bm{\tx}^L_{1:k},\bm{y}_{1:n})\,
  G_3(\bm{\tp}^*_{1:k}\mid \bm{d}^*_{1:n},\bm{y}_{1:n}),
\end{multline}
Here $k$ is either $k_n+1$ or $k_n-1$ according to whether $\mathcal{N}_{i,j,\bm\gamma}$ is a set of split or merge states, respectively. 
The distributions $G_1$, $G_2$ and $G_3$ 
refer to the distributions from which random terms are sampled in the OAS upon restriction to $\mathcal{N}_{i,j,\bm\gamma}$, and we suggest to sample from them in this specific order. 
For $G_1$, if $i$ and $j$ are in the same mixture component w.r.t. $\bm\gamma$, i.e. $\mathcal{N}_{i,j,\bm\gamma}$ is a set of splits of $\bm\gamma$, then we update the clustering of $l \in S$ via unordered allocation variables as in step 1. of Algorithm \ref{alg:EOAS_PY}, constrained 
so that $l $ is either allocated with $i$ or $j$, and then compute the corresponding ordered allocation variables, $\bm{d}^*_{1:n}$. Otherwise, if $i$ and $j$ are in different mixture components w.r.t. $\bm{\gamma}$, then $\mathcal{N}_{i,j,\bm\gamma}$ is a set of merges of $\bm\gamma$, 
and there is a unique value ordered allocation variables can take in $\mathcal{N}_{i,j,\bm\gamma}$, so we 
simply set $\bm{d}^*_{1:n} = \bm{d}^{L}_{1:n}$. 
In view of $G_2$, we need to account for a possibly different labeling of blocks of the partitions in the order of appearance as determined by $\bm{d}^*_{1:n}$ in comparison to those determined by $\bm{d}^L_{1:n}$. These are taken care of by a permutation $(\sigma_1,\ldots,\sigma_k)$ of $(1,\ldots,k)$ and, upon setting $\tx^*_d = \tx^{L}_{\sigma_d}$, we update $\tx^*_d$ for $d\in\{d^*_i,d^*_j\}$ as in the OAS. By definition, the restriction to $\mathcal{N}_{i,j,\bm\gamma}$ does not apply to $G_3$, so the latter coincides with the full conditional of discovered weights in order of appearance. Further details are provided in Algorithm \ref{alg:RGS}. 


\begin{algorithm}[ht]
\caption{Restricted Gibbs sampling scan}\label{alg:RGS}
\begin{small}
Let $\bm{\gamma} = (\bm{d}_{1:n},\bm{\tx}_{1:k_n},\bm{\tp}_{1:k_n})$ be the current state of the chain and $i$ and $j$ distinct observations. Also, let $\mathcal{N}_{i,j,\bm\gamma}$ be defined according to Definition \ref{def:1} and $\bm{\gamma}^L\in \mathcal{N}_{i,j,\bm\gamma}$. 
\begin{enumerate}
\item 
Sample $\bm{d}^*_{1:n}$ from
  $G_1(\bm{d}^*_{1:n}\mid
  \bm{d}^L_{1:n},\bm{\tx}^{L}_{1:k},\bm{\tp}^{L}_{1:k},\bm{y}_{1:n})$.

If $d_i=d_j$, set $\bm{c}_{1:n}=\bm{d}^L_{1:n}$, 
then update $c_l$ for $l\in S$ to be either equal to $c_i$ or $c_j$ according to
\begin{equation*}\label{eq:split_c}
  \Prob(c_l =c\mid c_i,c_j,\bm{\tx}^L_{1:k},\bm{\tp}^L_{1:k},y_l) 
  =\frac{\tp^L_{c}\,g(y_k\mid \tx^L_{c})}
  {\tp^{L}_{c_i}\,g(y_k\mid \tx^{L}_{c_i})+\tp^{L}_{c_j}\,g(y_k\mid \tx^{L}_{c_j})} \quad   
  \text{ for } c \in \{c_i,c_j\}.
\end{equation*}
Then, for  the distinct values, $\sigma_1,\ldots,\sigma_{k}$, of $\bm{c}_{1:n}$ in order of appearance,
set $d^*_i = d$ when $c_i = \sigma_d$, for $i\in \{1,\ldots,n\}$.

If $d_i\neq d_j$,
set $\bm{d}^*_{1:n} = \bm{d}^{L}_{1:n}$ 
and $\sigma_d=d$ for $d \in \{1,\ldots,k\}$.
%
\item 
Sample $\bm{\tx}^*_{1:k}$ from 
  $G_2(\bm{\tx}^*_{1:k}\mid \bm{d}^*_{1:n},\bm{\tx}^L_{1:k},\bm{y}_{1:n})$.

Set $\tx^*_d = \tx^{L}_{\sigma_d}$ for $d \in \{1,\ldots,k\}$. For $d \in \{d^*_i,d^*_j\}$, sample $\tx^*_d$ from
\begin{equation}\label{eq:pier}
  \pi(\tx^*_d\mid \cdots) \propto 
  \prod_{i:  \in D^*_d}g(y_i\mid \tx^*_d)\nu(\tx^*_d),\quad
  D^*_d = \{i : d^*_i = d\}.
\end{equation}  
%
\item 
Sample $\bm{\tp}^*_{1:k}$ from 
  $G_3(\bm{\tp}^*_{1:k}\mid \bm{d}^*_{1:n},\bm{y}_{1:n})$.  
  
Identically as in the OAS, cf. step 3. of Algorithm \ref{alg:EOAS_PY} in the Pitman-Yor case.
\end{enumerate}
Set $\bm{\gamma}^* = (\bm{d}^*_{1:n}, \bm{\tx}^*_{1:k},\bm{\tp}^*_{1:k})$ so that $\bm{\gamma}^*$ is a sample from \eqref{eq:Gii}.
\end{small}
\end{algorithm}

With the aid of restricted Gibbs sampling scans, we define a random proposal distribution, $Q(\,\cdot\mid\bm\gamma)$ as follows. Select two distinct observations, $i$ and $j$, at random uniformly. 
They determine the neighbourhood, $\mathcal{N}_{i,j,\bm\gamma}$, of the current state $\bm\gamma$, so also whether a split or a merge move is proposed. 
Next, we choose a suitable launch state $\bm{\gamma}^L$ as described in Algorithm \ref{alg:launch}. To do so we first initialize $\bm{\gamma}^L \in \mathcal{N}_{i,j,\bm\gamma}$ by either randomly splitting or merging the blocks containing $i$ and $j$, and sampling the component parameters of modified clusters from their prior, as well as all observed weights.
Then we update $\bm{\gamma}^L$ by means of $r$ \emph{intermediate} restricted Gibbs sampling scans. The convenience of choosing $\bm{\gamma}^L$ in this way is that, if $r$ is sufficiently large, the last update of $\bm{\gamma}^L$ will be roughly a sample from the restriction of the posterior to $\mathcal{N}_{i,j,\bm\gamma}$. 
Afterwards, we propose $\bm{\gamma}^*$ by performing a final restricted Gibbs sampling scan from the last updated launch state, so that
  $$Q(\bm\gamma^* \mid \bm{\gamma}) 
  =G_\pier(\bm\gamma^* \mid \bm\gamma^L).$$
Notice that the proposed state $\bm{\gamma}^*$ will also be roughly be a sample from the restricted posterior, which increases the probability that the proposed jump to $\mathcal{N}_{i,j,\bm\gamma}$ is accepted in the Metropolis--Hastings algorithm. 
Note that in order to compute the acceptance probability $A(\bm{\gamma}^*\mid \bm{\gamma})$ in \eqref{eq:A}, one needs to compute $Q(\bm\gamma \mid \bm{\gamma}^*)$. The latter is defined analogously to $Q(\bm\gamma^* \mid \bm{\gamma})$:
  $$Q(\bm\gamma \mid \bm{\gamma}^*) 
  = G_\pier(\bm\gamma \mid \bm\gamma^{*L}),$$
where $\bm{\gamma}^{*L} \in \mathcal{N}_{i,j,\bm\gamma^*}$
is obtained via Algorithm \ref{alg:launch}, replacing $\bm{\gamma}$ and $\bm{\gamma}^L$ with  $\bm{\gamma}^*$ and $\bm{\gamma}^{*L}$, respectively. 
Notably,  $\bm{\gamma}^{*L}$ is of the same dimension of the current state $\bm{\gamma}$.  
The split-merge procedure finalizes by either accepting the jump from $\bm{\gamma}$ to $\bm{\gamma}^*$ with probability $A(\bm{\gamma}^*\mid \bm{\gamma})$ or rejecting it and letting the chain stay at $\bm{\gamma}$.

We discuss next more in details the computation of the acceptance probability. It involves $Q(\bm\gamma^* \mid \bm{\gamma})$ and $Q(\bm\gamma \mid \bm{\gamma}^*)$ together with the ratio 
  $\pi(\bm\gamma^*\mid \bm{y}_{1:n})/\pi(\bm\gamma\mid \bm{y}_{1:n})$.
The latter does not pose any particular problem, as the normalization constants cancel out, nonetheless care is needed in the evaluation of $Q(\bm\gamma^* \mid \bm{\gamma})$ and $Q(\bm\gamma \mid \bm\gamma^*)$, so in turn of $G_\pier$ in \eqref{eq:Gii}. The distribution $G_1$  is the product of probabilities as detailed in Algorithm \ref{alg:RGS}, step 1., however,
the availability of $G_2$ and $G_3$ in closed form requires further attention. If $g$ and $\nu$ form a conjugate pair, the normalization constant of $G_2$ is easy to obtain. Otherwise
we confine ourselves, as in \cite{Jai:Nea:07}, to the case of the pair $g$ and $\nu$ being conditionally conjugate in one model parameter if the remaining parameters are held fixed. Even though their joint full conditional distribution is not analytically tractable, i.e. we cannot compute explicitly the normalization constant in \eqref{eq:pier}, we can replace the joint density $\pi(\tx^*_d\mid \cdots)$ with the product of conditional densities pertaining subsets of model parameters, see \cite{Jai:Nea:07}.  As for $G_3$, the Pitman-Yor case do not pose any problem: the full conditionals are available in close form, cf. Algorithm \ref{alg:EOAS_PY}.

\begin{algorithm}[ht]
\caption{Launch state $\bm{\gamma}^L$ $\in \mathcal{N}_{i,j,\bm{\gamma}}$}\label{alg:launch}
\begin{small}
Let $\bm{\gamma} = (\bm{d}_{1:n},\bm{\tx}_{1:k_n},\bm{\tp}_{1:k_n})$, be the current state of the chain and $i$ and $j$ distinct observations. 
\begin{enumerate}
\item Initialize $\bm{\gamma}^L=(\bm{d}^L_{1:n},\bm{\tx}^L_{1:k},\bm{\tp}^L_{1:k})$ in $\mathcal{N}_{i,j,\bm\gamma}$:
\begin{enumerate}[(a)]
\item
Set $\bm{c}_{1:n}=\bm{d}_{1:n}$. If $i$ and $j$ belong to the same block w.r.t $\bm\gamma$, i.e. $d_i = d_j$, fix $k = k_n+1$, $c_j = k$ and independently for $l\in S$, randomly set $c_l$ to either $c_i$ or $c_j$, with equal probability. This is equivalent to randomly split the elements of $D_{d_i} = D_{d_j}$ into two blocks, one containing $i$ and the other one containing $j$.
%
If $i$ and $j$ belong to different components of $\bm\gamma$, set $k = k_n-1$, $c_i=c_j$ and, for every $l\in S$, set $c_l=c_j$ so that $D_{d_i}$ and $D_{d_j}$ are merged into a single block.
%
Then, for the distinct values, $\sigma_1,\ldots,\sigma_{k}$, of $\bm{c}_{1:n}$ in order of appearance,
set $d^{L}_{i} =d$ when $c_i = \sigma_d$, for $i\in \{1,\ldots,n\}$.
\item
Set $\tx^L_d = \tx_{\sigma_d}$ for $d \in \{1,\ldots,k\}$. Then for $d \in \{d_i,d_j\}$, sample $\tx^L_d$ from the prior $\nu$. 
\item
Sample $\bm{\tp}^{L}_{1:k}$ from the prior.
\end{enumerate}
\item Update $\bm{\gamma}^L$ through $r$ restricted Gibbs sampling scans. This is, sample $\bm{\gamma}^* \sim G_\pier(\bm{\gamma}^*\mid \bm{\gamma}^L)$, as described in Algorithm \ref{alg:RGS}, and set $\bm{\gamma}^L = \bm{\gamma}^*$ for $r$ times. 
\end{enumerate}
\end{small}
\end{algorithm}

As explained by \cite{Jai:Nea:04,Jai:Nea:07}, the random proposal distributions do not harm the validity of split-merge moves. To spell this out in our context, first note that the randomness of the proposal distribution, and hence that of the Metropolis-Hastings kernel in \eqref{eq:K}, comes from two sources: the choice of the pair $\{i,j\}$ and the choice of the two launch states $\bm{\gamma}^L,\bm{\gamma}^{*L}$. To make it explicit in the notation, write $K_{i,j,\bm{\gamma}^L,\bm{\gamma}^{*L}}$ instead of $K$ in \eqref{eq:K}. Hence a split-merge move can be equivalently described as sampling from the mixture kernel
\begin{equation}\label{eq:KK}
  \mathcal{K}(\bm{\gamma}^*\mid \bm{\gamma}) = \sum_{i<j}w_{i,j}\int  K_{i,j,\bm{\gamma}^L,\bm{\gamma}^{*L}}(\bm{\gamma}^*\mid \bm{\gamma}) \pi(\mathrm{d}\bm{\gamma}^L,\,\mathrm{d}\bm{\gamma}^{*L}\mid \bm{\gamma},i,j),
\end{equation}
where $w_{i,j} = 1/\binom{n}{2}$ is the probability of choosing the unordered pair $\{i,j\}$ and $\pi(\,\cdot\mid \bm{\gamma},i,j)$ refers to the distribution from which launch states are sampled as determined by Algorithm \ref{alg:launch}. 
This is, given $i$, $j$ and $\bm{\gamma}$ both launch states $\bm{\gamma}^L \in \mathcal{N}_{i,j,\bm{\gamma}}$ and $\bm{\gamma}^{*L} \in \mathcal{N}_{i,j,\bm{\gamma}^*}$ are independently sampled via Algorithm \ref{alg:launch}. A key point here is that in order  to sample the second launch state $\bm{\gamma}^{*L} \in \mathcal{N}_{i,j,\bm\gamma^*}$, we do not require to sample the proposed state, $\bm{\gamma}^*$, beforehand. This follows from the crucial fact that in Algorithm \ref{alg:launch}, $\bm{\gamma}^L$ only depends on $\bm{\gamma}$ through $\mathcal{N}_{i,j,\bm\gamma}$ and that for any two choices $\bm{\gamma}^*_1,\bm{\gamma}^*_2 \in \mathcal{N}_{i,j,\bm\gamma}$ we get $\mathcal{N}_{i,j,\bm{\gamma}^*_1} = \mathcal{N}_{i,j,\bm{\gamma}^*_2}$. In other words, to obtain $\bm{\gamma}^{*L}$ it is enough to identify $\mathcal{N}_{i,j,\bm{\gamma}^*}$ for any arbitrary $\bm{\gamma}^* \in \mathcal{N}_{i,j,\bm{\gamma}}$, randomly initialize $\bm{\gamma}^{*L}$ in this neighbourhood and update its value via restricted Gibbs sampling scans. Thus $\mathcal{K}(\bm{\gamma}^*\mid \bm{\gamma})$ in \eqref{eq:KK} is well defined and describes a split-merge move. Furthermore, under the same argument, but assuming $\bm{\gamma}^*$ is the current state and $\bm{\gamma}$ is the proposed state, we get $\pi(\mathrm{d}\bm{\gamma}^L,\,\mathrm{d}\bm{\gamma}^{*L}\mid \bm{\gamma},i,j) = \pi(\mathrm{d}\bm{\gamma}^L,\,\mathrm{d}\bm{\gamma}^{*L}\mid \bm{\gamma}^*,i,j)$, meaning that both launch states are sampled identically by conditioning on either $\bm{\gamma} \in \mathcal{N}_{i,j,\bm{\gamma}^*}$ or on $\bm{\gamma}^* \in \mathcal{N}_{i,j,\bm{\gamma}}$. This yields $\mathcal{K}(\bm{\gamma}^*\mid \bm{\gamma})\pi(\bm{\gamma}\mid \bm{y}_{1:n}) = \mathcal{K}(\bm{\gamma}\mid \bm{\gamma}^*)\pi(\bm{\gamma}^*\mid \bm{y}_{1:n})$, i.e. $\mathcal{K}$ inherits the invariant distribution of $K_{i,j,\bm{\gamma}^L,\bm{\gamma}^{*L}}(\bm{\gamma}^*\mid \bm{\gamma})$, from which the validity of the split-merge procedure follows.

\subsection{Split-merge moves for non size-biased ordered weights}

To keep split-merge moves widely applicable, we discuss next how to adapt them for mixing priors without tractable size-biased ordered weights $\bm{\tp}$. As done in the OAS, we 
replace $\bm{\tp}_{1:k_n}$  with $(\bm{p}_{1:\overline{\bm{\alpha}}_{k_n}},\bm{\alpha}_{1:k_n})$, where $\overline{\bm{\alpha}}_{k_n} = \max\{\bm{\alpha}_{1:k_n}\}$ and recalling that $\tp_j = p_{\alpha_j}$. Hence, now the current state of the chain, $\bm{\gamma} = (\bm{d}_{1:n},\bm{\tx}_{1:k_n},\bm{p}_{1:\overline{\bm{\alpha}}_{k_n}},\bm{\alpha}_{1:k_n})$, includes ordered allocation variables, component parameters in order of appearance, weights in any order, and the indexes which describe the order in which weights were discovered. 
Seemingly, the easiest way to adapt split-merge moves is to just modify the updating of weights in restricted Gibbs sampling scans, identically as in the OAS. This is replace $G_\pier(\bm{\gamma}^* \mid \bm\gamma^L)$ in \eqref{eq:Gii} with
\begin{equation}\label{eq:Gii_a}
\begin{aligned}
G_\pier(\bm{\gamma}^* \mid \bm\gamma^L)& = G_1(\bm{d}^*_{1:n}\mid 
  \bm{d}^L_{1:n},\bm{\tx}^L_{1:k},\bm{p}^L_{1:{\overline{\bm{\alpha}}^L_k}},\bm{\alpha}^L_{1:k},\bm{y}_{1:n})\,
  G_2(\bm{\tx}^*_{1:k}\mid \bm{d}^*_{1:n},\bm{\tx}^L_{1:k},\bm{y}_{1:n})\times\\
& \quad \quad \quad \quad G_{3,\alpha}(\bm{\alpha}^*_{1:k}\mid \bm{d}^*_{1:n}, \bm{p}^{L}_{1:\overline{\bm{\alpha}}^{L}_k},\bm{\alpha}^L_{1:k},\bm{y}_{1:n})\,G_{3,p}(\bm{p}^*_{1:\overline{\bm{\alpha}}^*_k}\mid \bm{d}^*_{1:n},\bm{\alpha}^{*}_{1:k},\bm{y}_{1:n}),
\end{aligned}
\end{equation}
where $G_1$ and $G_2$ are as in Algorithm \ref{alg:RGS} (substituting $\tp^L_c$ with $p^L_{\alpha^L_c}$), and $G_{3,\alpha}$ and $G_{3,p}$ stand for the distributions from which $\bm{\alpha}_{1:k}$ and $\bm{p}_{1:\overline{\bm{\alpha}}_{k}}$ are sampled in the OAS (recall that the restriction to $\mathcal{N}_{i,j,\bm{\gamma}}$ does not affect the weights). This substitution can be done in the intermediate Gibbs sampling scans that are used to define launch states $\bm\gamma^L$ (see 2 in Algorithm \ref{alg:launch}). However, care must be taken if we use $G_\pier$ in \eqref{eq:Gii_a}  to define the proposal distributions, i.e. set $Q(\bm{\gamma}^*\mid \bm{\gamma}) = G_\pier(\bm{\gamma}^* \mid \bm\gamma^L)$. This because when $k$ is large $\bm{\alpha}^*_{1:k}$ is sampled via several locally informed Metropolis-Hastings steps cf. Appendix \ref{app:OAS}. This makes it challenging to compute the normalization constant of $G_{3,\alpha}(\bm{\alpha}^*_{1:k}\mid \cdots)$,  which hinders the evaluation of the acceptance probability $A(\bm{\gamma}^*\mid \bm{\gamma})$ in \eqref{eq:A}.
A solution is to set the proposal distributions $Q(\bm{\gamma}^*\mid \bm{\gamma})$ as a slightly simplified version of $G_\pier$ with computable normalization constant. Namely we can set $Q(\bm{\gamma}^*\mid \bm{\gamma}) = H_\pier(\bm{\gamma}^* \mid \bm\gamma^L)$ with
\begin{equation}\label{eq:Hii_a}
\begin{aligned}
H_\pier(\bm{\gamma}^* \mid \bm\gamma^L)& = \pi(\bm{\alpha}^*_{1:k}\mid \bm{p}^{L})G_1(\bm{d}^*_{1:n}\mid 
  \bm{d}^L_{1:n},\bm{\tx}^L_{1:k},\bm{p}^L_{1:{\overline{\bm{\alpha}}^*_k}},\bm{\alpha}^*_{1:k},\bm{y}_{1:n})\times\\
&  \quad \quad \quad \quad \quad \quad \quad \quad  G_2(\bm{\tx}^*_{1:k}\mid \bm{d}^*_{1:n},\bm{\tx}^L_{1:k},\bm{y}_{1:n})
 G_{3,p}(\bm{p}^*_{1:\overline{\bm{\alpha}}^*_k}\mid \bm{d}^*_{1:n},\bm{\alpha}^{*}_{1:k},\bm{y}_{1:n}),
\end{aligned}
\end{equation}
In contrast to $G_\pier$, $H_\pier$ first samples $\bm{\alpha}^*_{1:k}$ as a priori from $\pi(\bm{\alpha}^*_{1:k}\mid \bm{p}^{L})$, at this stage the required extra weights $p^L_j$ (if any)  with $\overline{\bm{\alpha}}^{L}_k < j \leq \overline{\bm{\alpha}}^*_k$ can be sampled sequentially from the conditional distributions, $\pi(p^L_{j+1}\mid \bm{p}^L_{1:j})$, corresponding to the prior law $\pi(\bm{p})$. Since weights corresponding to unobserved components would have been updated in this way, cf. \eqref{eq:post_p}, we can simply assume that the extra weights we required were already included in $\bm{\gamma}^L$. Once $\bm{\alpha}^*_{1:k}$ has been obtained, $H_\pier$ then samples $\bm{d}^*_{1:n}$ from $G_1$ as in Algorithm \ref{alg:RGS} with $\tp^L_{c} =  p^L_{\alpha^*_c}$, $\bm{\tx}^*_{1:k}$ from $G_2$ as in Algorithm \ref{alg:RGS} and $\bm{p}^*_{1:\overline{\bm{\alpha}}^*_k}$ from $G_{3,p}$ as in the OAS. The advantage of using $H_\pier$ instead of $G_\pier$ to define $Q$ is that the normalization constant of $\pi(\bm{\alpha}^*_{1:k}\mid \bm{p}^{L})$ is available, cf. \eqref{eq:size-biased_pick}.
The disadvantage is that the posterior distribution is no longer invariant with respect to $H_\pier$. In spite of this, the probability of accepting proposed states $\bm{\gamma}^* \sim Q(\bm{\gamma}^*\mid \bm{\gamma}) = H_\pier(\bm{\gamma}^* \mid \bm\gamma^L)$ should not be affected significantly. 
Split-merge moves for non-size-biased ordered weights are fully specified in Algorithm \ref{alg:SM_nsb}. In particular, we detail the choice of the launch states and  the restricted Gibbs sampling scans by comparison to Algorithms \ref{alg:RGS} and \ref{alg:launch}.

\begin{remark}
An alternative to the choice $Q(\bm{\gamma}^*\mid \bm{\gamma}) = H_\pier(\bm{\gamma}^* \mid \bm\gamma^L)$ is to take $Q(\bm{\gamma}^*\mid \bm{\gamma})$ similar to a restricted Gibbs sampling scan with the difference that, when we updating $\bm{\alpha}^*_{1:k}$, only one locally informed Metropolis-Hastings step is performed, cf. \eqref{eq:pi_rho}. In this case, the posterior remains invariant w.r.t. $Q$ and we can also evaluate its normalization constant. This solution is feasible if split-merge moves are included as an acceleration step in the OAS. However, if split-merges moves are used on their own as a standalone MCMC scheme, a single locally informed Metropolis-Hastings step has reduced support, which might affect the irreducibility of the chain.
\end{remark}

\begin{algorithm}[tb]
\caption{Split-merge moves for non-size-biased ordered weights}\label{alg:SM_nsb}
\begin{small}
Let $\bm{\gamma} = (\bm{d}_{1:n},\bm{\tx}_{1:k_n},\bm{p}_{1:\overline{\bm{\alpha}}_{k_n}},\bm{\alpha}_{1:k_n})$, be the current configuration of the ordered allocation variables, and discovered component parameters and weights. 
\begin{enumerate}
\item Select two distinct observations, $i$ and $j$ at random uniformly.
\item Sample launch state $\bm\gamma^L \in \mathcal{N}_{i,j,\bm{\gamma}}$:
\begin{itemize}
\item[(i)] First initialize $\bm{d}^{L}_{1:n}$ and $\bm{\tx}^{L}_{1:k}$ as in Algorithm \ref{alg:launch}. Then sample $\bm{\alpha}^{L}_{1:k}$ and $\bm{p}^{L}_{1:\overline{\bm{\alpha}}^{L}_k}$ from their prior.
\item[(ii)] Update $\bm\gamma^L$ through $r$ restricted Gibbs sampling scans, where each time sample $\bm{\gamma}^* \sim G_\pier(\bm{\gamma}^* \mid \bm\gamma^L)$ in \eqref{eq:Gii_a} and set $\bm\gamma^L = \bm{\gamma}^*$. Follow Algorithm \ref{alg:RGS} with the following changes. In step 1. set $\tp^{L}_c = p^{L}_{\alpha^{L}_c}$, then after updating $\bm{d}^*_{1:n}$, set $\bm{\alpha}^L_{1:k} = (\alpha^{L}_{\sigma_1},\ldots,\alpha^{L}_{\sigma_k})$, for $\sigma_1,\ldots,\sigma_k$ defined there. In step 3., sample $\bm{\alpha}^*_{1:k}$ and $\bm{p}^*_{1:\overline{\bm{\alpha}}^*_{k}}$ identically as in the OAS, cf. step 3. in Algorithm \ref{alg:EOAS}, conditioning on $\bm{d}^*_{1:n}$.
\end{itemize}
\item Sample $\bm{\gamma}^* \sim Q(\,\cdot\mid \bm{\gamma}) = H_\pier\l(\cdot \mid \bm\gamma^L\r)$ as in \eqref{eq:Hii_a}: First sample $\bm{\alpha}^*_{1:k}$ from $\pi(\bm{\alpha}^*_{1:k}\mid \bm{p}^{L})$ as a priori along with the require extra weights  $p^L_j$, for $\overline{\bm{\alpha}}^{L}_k < j \leq \overline{\bm{\alpha}}^*_k$, sequentially from $\pi(p^L_{j+1}\mid \bm{p}^L_{1:j})$, when needed. Conditioning on $\bm{\alpha}^*_{1:k}$, sample $\bm{d}^*_{1:n}$ as in Algorithm \ref{alg:RGS} step 1. with $\tp^{L}_c = p^{L}_{\alpha^*_c}$, then set   $\bm{\alpha}^*_{1:k} = (\alpha^*_{\sigma_1},\ldots,\alpha^*_{\sigma_k})$ for $\sigma_1,\ldots,\sigma_k$ defined there. Finally, sample $\bm{\tx}^*_{1:k}$ as in  Algorithm \ref{alg:RGS} step 2., and the weights $\bm{p}^*_{1:\overline{\bm{\alpha}}^*_{k}}$ identically as in the OAS, cf. step 3. in Algorithm \ref{alg:EOAS},  conditioning on $\bm{\alpha}^*_{1:k}$ and $\bm{d}^*_{1:n}$.
\item Sample launch state $\bm{\gamma}^{*L} \in \mathcal{N}_{i,j,\bm{\gamma}^*}$, similarly to 2.(i) and 2.(ii) above, define $Q(\bm\gamma\mid \bm{\gamma}^*) = H_\pier\l(\bm\gamma \mid \bm{\gamma}^{*L} \r)$ and compute $A(\bm{\gamma}^*\mid \bm{\gamma})$ in \eqref{eq:A}.
\item  Accept the move and set $\bm{\gamma} = \bm{\gamma}^*$ with probability $A(\bm{\gamma}^*\mid \bm{\gamma})$, otherwise leave $\bm{\gamma}$ unchanged.
\end{enumerate}
\end{small}
\end{algorithm}

\subsection{Illustration}\label{sec:SM_illust}

For illustration purposes we designed a simulation study that tests whether the inclusion of split-merge moves aids the OAS escape local modes. Roughly speaking, a single run of the experiment consists in generating $100$ data points from a mixture of two or more
Normal distributions. We then estimate the density of the data and compare it to the true density, by taking into consideration $100$ iterations of the OAS, after two different types of burn-in period. 
In the first case we consider $110$ burn-in iterations without split-merge moves, as for the second type we keep the burn-in period as low as $10$ iterations, each iteration including a split-merge move. For the split-merge moves we performed $r = 10$ restricted Gibbs sampling scans. In all cases the $100$ iterations of the OAS, after the burn-in period, do not include split-merge moves. 
We have forced the samplers to start in a local mode by initializing all ordered allocations variables equal to one. This way, all data points are allocated to a single component, which does not correspond to the true clustering.
Now, to compare the estimated density, $\hat{f}$, 
with the true density, $f$,
we computed the total variation distance $d_{\mathrm{TV}}(f,\hat{f}) = \frac{1}{2}\int_{-\infty}^{\infty} |f(x)-\hat{f}(x)| dx$. When the sampler effectively escapes the local mode in the given burn-in period, we expect to observe a small value of $d_{\mathrm{TV}}(f,\hat{f})$. 
The rationale for using a small number of iterations is to pin-point the ability of the sampler to escape a local mode. Indeed, after a  large enough number of iterations, the OAS is expected to escape the local mode even without split-merge moves. In order to avoid bias in the results caused by the small number of iterations coupled with the simulated data being relatively hard to cluster correctly, we have repeated the experiment $100$ times, each time with a different dataset sampled from the given data distribution.
For the true density, $f$, we have considered the \texttt{trimodal} mixture, $0.25\mathsf{N}(-1.4,0.3^2) + 0.5\mathsf{N}(0,0.3^2) + 0.25\mathsf{N}(1.4,0.3^2)$, and the \texttt{bimodal} one, $0.5\mathsf{N}(-1,0.6^2) + 0.5\mathsf{N}(1,0.6^2)$. As for the mixing priors, we chose a Dirichlet process prior which enjoys tractable size-biased ordered weights, as well as a Geometric prior which does not. 

\begin{table}
\begin{small}
\centering
\begin{tabular}{|c | C{2cm}C{2cm} | C{2cm}C{2cm}|}
\multicolumn{1}{c}{OAS}
& \multicolumn{2}{c}{\texttt{trimodal}}
& \multicolumn{2}{c}{\texttt{bimodal}}\\ \hline
& DP & GP & DP & GP  \\ \hline
without split-merge & 0.2637 (0.037) & 0.2463 (0.063) & 0.1077 (0.053) & 0.1228 (0.033)  \\
with split-merge & 0.1619 (0.085) & 0.2241 (0.064) & 0.0543 (0.026) & 0.0989 (0.035) \\ \hline
\end{tabular}
\caption{Mean (standard errors in parenthesis) of $d_{\mathrm{TV}}(f,\hat{f})$ for the $100$ runs, by sampler (burn-in period considered), choice of $f$ and mixing prior.}
\label{tab:SM}
\end{small}
\end{table}

\begin{figure}
\centering
\includegraphics[scale=0.4]{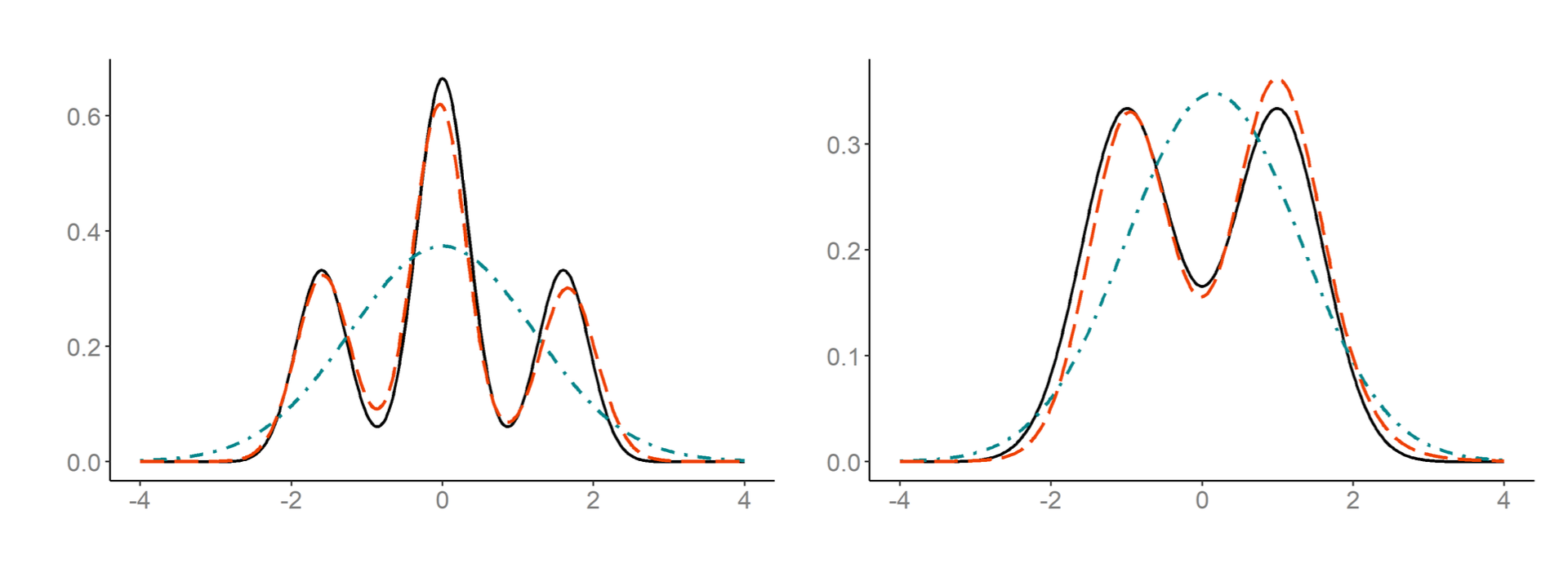}
\caption{Plot of $f$ (solid line) and $\hat{f}$ via the OAS with (dashed line) and without (dot-dashed line) split-merge moves. The graph on the left corresponds to the \texttt{trimodal} mixture using a DP prior and the one on the right pertains the \texttt{bimodal} mixture using a GP prior.}\label{fig:SM}
\end{figure}

As it can be observed in Table \ref{tab:SM}, $d_{\mathrm{TV}}(f,\hat{f})$ is on average smaller when split-merge moves are included in the burn-in period. Thus confirming that split-merge moves help the OAS escape the local mode in only $10$ iterations, while $110+100$ iterations without split-merge moves are not sufficient more often than not. To have a visual representation of the experiment, in Figure \ref{fig:SM} we have plotted the true (solid line) and the two estimated densities (dashed lines and dot-dashed lines correspond to the samplers with and without split-merge moves, respectively) for a single run of the experiment. The graph in the left shows the results for the \texttt{trimodal} mixture with a DP prior and the one on the right corresponds to the \texttt{bimodal} mixture with a GP prior. In each case, the run was chosen because the simulated dataset was representative of the true density, making it relatively easy to cluster data points correctly. In the DP case, total variation distances are $0.2913$ and $0.0519$ without and with split-merge moves, respectively, and in the GP case, they are $0.1661$ and $0.0414$  without and with split-merge moves, respectively. Notice that in comparison to Table \ref{tab:SM} the total variation distances with split-merge moves are small. This does not mean, however, that in most runs the OAS with split-merge moves struggled to escape the local mode. It simply reflects the fact that not all datasets are as representative of the true distribution, in which case $d_{\mathrm{TV}}(f,\hat{f})$ might not be so close to zero  even when the sampler has escaped the local mode.

\section{Final remarks}

In this paper we have derived an enhanced version of the order allocation sampler (OAS) of \cite{Deb:Gil:23} that runs fast, mixes remarkably well when compared to marginal samplers and improves substantially over out-of-the-shelf conditional samplers even when the mixing prior does not enjoy a tractable predictive distribution that allows for marginalization. The key modification is that ordered allocation variables can be updated sequentially as unordered ones, so the sampler evolves in the space of partitions like marginal samplers, as opposed to the space of ordered partitions with blocks in the least element order as in the original version of the OAS.
To this aim, we leverage on the one-to-one correspondence between unordered and ordered partitions with blocks in the least element order, together with the exchangeability of the data and key distributional properties of the weights in size-biased order. Exploiting the similarities with marginal samplers, we are then able to adapt the split-merge moves of \cite{Jai:Nea:04,Jai:Nea:07} to the OAS. This adaptation would have been rather challenging with the original OAS because of the constraint induced by the least element ordered, cf. $\Ind_{\D}$ in \eqref{eq:like_sb}, when splitting or merging  blocks of the partition.  

We conclude with a couple of remarks about the relationship between the new version of the OAS and the marginal sampler.
Similar to Algorithm 8 by \cite{Nea:00}, in the updating allocation variables we generate a potentially new discovered component parameter from the prior, $\tx^*_{k+1} \sim \nu$. It is not difficult to envision here a different updating step that mimics Algorithm 2 by \cite{Nea:00}. Instead of  sampling $c_i$ from \eqref{eq:c_post0} by conditioning on $\tx^*_{k+1}$, one can integrate out this quantity and sample $c_i$ from
\begin{equation}\label{eq:c_post1}
  \Prob(\,c_i = c\mid \rest\,) \propto \begin{cases}
p^*_c\, g(y_i\mid x^*_c) & \text{ if } c \in \{1,\ldots,k\},\\
\l(1-\sum_{l=1}^{k}p^*_l\r) \int g(y_i\mid x^*_c)\nu(\ddr x^*_c)& \text{ if } c = k+1.
\end{cases}
\end{equation}
provided that the integral in \eqref{eq:c_post1} is available in closed form, which is the case of $g$ and $\nu$ forming a conjugate pair. In case $c_i = k+1$, we then sample the associated component parameter $\tx^*_{k+1}$ from $\pi(\tx^*_{k+1} \mid \cdots) \propto g(y_i\mid \tx^*_{k+1})\nu(\tx^*_{k+1})$ as well as the corresponding weight $\tp^*_{k+1}$ as explained in Section \ref{sec:EOAS}. A second remark is to highlight the key differences between the split-merge moves for marginal samplers by \cite{Jai:Nea:04,Jai:Nea:07} and those for the OAS in Section \ref{sec:SMmoves}. Our proposal works on an expanded space due to the inclusion of weights in order of appearance. This means that each restricted Gibbs sampling scan requires to update more random variables and that the computation of the acceptance probability in \eqref{eq:A} does not simplifies as nicely. On the positive side, by conditioning on the weights we can update (usual) allocation variables independently of each other (see 1 in Algorithm \ref{alg:RGS}) because components are neither created nor destroyed in restricted Gibbs sampling scans. This means that this step is parallelizable which is quite important in big-data settings. For marginal split-merge moves it is also true that components are not created or destroyed, however, each time an allocation variables is updated, the quantities $m_c =|\{l: c_l = c\}|$ change, cf.  \eqref{eq:c_post0_mar}. This prohibits updating allocation variables independently of each other in restricted Gibbs sampling scans, and thus the parallelization of this step. Another advantage is that having adapted split-merge moves to the OAS, makes them applicable to a wider range of mixing priors, even to the ones that do not admit marginalization because of the lack of tractable P\'olya urn scheme representation of the predictive distribution.


\section*{Acknowledgements}

M. F. Gil-Leyva and F. Selva gratefully acknowledge the support of PAPIIT Grant IA101124. F. Selva was also supported by a CONAHCyT PhD scholarship. P. De Blasi acknowledges support of MUR - Prin 2022 - Grant no. 2022CLTYP4, funded by the European Union – Next Generation EU.



\bibliographystyle{apalike}
\bibliography{short_biblio}

\begin{thebibliography}{}

\bibitem[Blackwell and MacQueen, 1973]{Bla:Mac:73}
Blackwell, D. and MacQueen, J.~B. (1973).
\newblock {Ferguson distributions via Polya urn schemes}.
\newblock {\em {Ann. Statist.}}, 1(2):353 -- 355.

\bibitem[Blei and Jordan, 2006]{Blei:Jor:06}
Blei, D.~M. and Jordan, M.~I. (2006).
\newblock {Variational inference for Dirichlet process mixtures}.
\newblock {\em Bayesian Analysis}, 1(1):121 -- 143.

\bibitem[De~Blasi and Gil–Leyva, 2023]{Deb:Gil:23}
De~Blasi, P. and Gil–Leyva, M.~F. (2023).
\newblock Gibbs sampling for mixtures in order of appearance: The ordered
  allocation sampler.
\newblock {\em Journal of Computational and Graphical Statistics}, 0(0):1--9.

\bibitem[Escobar, 1994]{Esc:94}
Escobar, M.~D. (1994).
\newblock Estimating normal means with a dirichlet process prior.
\newblock {\em Journal of the American Statistical Association},
  89(425):268--277.

\bibitem[Escobar and West, 1995]{Esc:Wes:95}
Escobar, M.~D. and West, M. (1995).
\newblock {B}ayesian density estimation and inference using mixtures.
\newblock {\em {J. Amer. Statist. Assoc.}}, 90:577--588.

\bibitem[Favaro et~al., 2016]{Fav:etal:16}
Favaro, S., Lijoi, A., Nava, C., Nipoti, B., Pr\"unster, I., and Teh, Y.~W.
  (2016).
\newblock {O}n the stick-breaking representation for homogeneous {NRMI}s.
\newblock {\em Bayesian Anal.}, 11:697--724.

\bibitem[Favaro et~al., 2012]{Fav:etal:12}
Favaro, S., Lijoi, A., and Pr\"{u}nster (2012).
\newblock On the stick-breaking representation of normalized inverse {G}aussian
  priors.
\newblock {\em Biometrika}, 99:663--674.

\bibitem[Ferguson, 1973]{Fer:73}
Ferguson, T. (1973).
\newblock A {B}ayesian analysis of some nonparametric problems.
\newblock {\em {Ann. Statist.}}, 1(2):209--230.

\bibitem[Fuentes-García et~al., 2010]{Fue:etal:10}
Fuentes-García, R., Mena, R.~H., and Walker, S.~G. (2010).
\newblock A new {B}ayesian nonparametric mixture model.
\newblock {\em Communications in Statistics - Simulation and Computation},
  39(4):669--682.

\bibitem[Gil–Leyva and Mena, 2021]{Gil:Men:21}
Gil–Leyva, M.~F. and Mena, R.~H. (2021).
\newblock {S}tick-breaking processes with exchangeable length variables.
\newblock {\em Journal of the American Statistical Association}, page in press.

\bibitem[Green and Richardson, 2001]{Gre:Ric:01}
Green, P.~J. and Richardson, S. (2001).
\newblock {Modeling heterogeneity with and without the Dirichlet process}.
\newblock {\em Scand. J. Stat.}, 28:355--375.

\bibitem[Ishwaran and James, 2001]{Ish:Jam:01}
Ishwaran, H. and James, L.~F. (2001).
\newblock Gibbs sampling methods for stick-breaking priors.
\newblock {\em J. Amer. Statist. Assoc.}, 96:161--173.

\bibitem[Jain and Neal, 2004]{Jai:Nea:04}
Jain, S. and Neal, R.~M. (2004).
\newblock {A split-merge Markov Chain Monte Carlo procedure for the Dirichlet
  process mixture model}.
\newblock {\em {J. Comput. Graph. Statist.}}, 13(1):158--182.

\bibitem[Jain and Neal, 2007]{Jai:Nea:07}
Jain, S. and Neal, R.~M. (2007).
\newblock {Splitting and merging components of a nonconjugate Dirichlet process
  mixture model}.
\newblock {\em Bayesian Anal.}, 2(3):445 -- 472.

\bibitem[Kalli et~al., 2011]{Kall:etal:11}
Kalli, M., Griffin, J.~E., and Walker, S. (2011).
\newblock Slice sampling mixtures models.
\newblock {\em Statist. Comput.}, 21:93--105.

\bibitem[Lo, 1984]{Lo:84}
Lo, A.~Y. (1984).
\newblock On a class of bayesian nonparametric estimates: I. density estimates.
\newblock {\em {Ann. Statist.}}, 12(1):351--357.

\bibitem[Miller and Harrison, 2018]{Mil:Har:18}
Miller, J.~W. and Harrison, M.~T. (2018).
\newblock Mixture models with a prior on the number of components.
\newblock {\em {J. Amer. Statist. Assoc.}}, 113(521):340--356.

\bibitem[Neal, 2000]{Nea:00}
Neal, R.~M. (2000).
\newblock {Markov Chain Sampling Methods for {D}irichlet Process Mixture
  Models}.
\newblock {\em {J. Comput. Graph. Statist.}}, 9(2):249--265.

\bibitem[Papaspiliopoulos and Roberts, 2008]{Pap:Rob:08}
Papaspiliopoulos, O. and Roberts, G.~O. (2008).
\newblock {Retrospective Markov chain Monte Carlo methods for Dirichlet process
  hierarchical models}.
\newblock {\em Biometrika}, 95:169--186.

\bibitem[Pitman, 1995]{Pit:95}
Pitman, J. (1995).
\newblock Exchangeable and partially exchangeable random partitions.
\newblock {\em Probab. {T}heory {R}elat. {F}ields}, 102:145--158.

\bibitem[Pitman, 1996a]{Pit:96}
Pitman, J. (1996a).
\newblock Random discrete distributions invariant under size-biased
  permutation.
\newblock {\em Adv. {A}ppl. {P}robab.}, 28(2):525--539.

\bibitem[Pitman, 1996b]{Pit:96ims}
Pitman, J. (1996b).
\newblock {Some developments of the Blackwell-MacQueen urn scheme}.
\newblock In et~al., T.~F., editor, {\em Statistics, Probability and Game
  Theory; Papers in honor of {D}avid {B}lackwell}, volume~30 of {\em Lecture
  Notes-Monograph Series}, pages 245--267, Hayward, California. Institute of
  Mathematical Statistics.

\bibitem[Pitman, 2006]{Pit:06}
Pitman, J. (2006).
\newblock {\em {Combinatorial Stochastic Processes}}, volume 1875 of {\em
  \'{E}cole d'\'et\'e de probabilit\'es de {S}aint-{F}lour}.
\newblock Springer-{V}erlag {B}erlin {H}eidelberg, New York, first edition.

\bibitem[Pitman and Yor, 1992]{Pit:Yor:92}
Pitman, J. and Yor, M. (1992).
\newblock Arcsine laws and interval partitions derived from a stable
  subordinator.
\newblock {\em Proceedings of the {L}ondon {M}athematical {S}ociety},
  s3-65(2):326--356.

\bibitem[Pitman and Yor, 1997]{Pit:Yor:97}
Pitman, J. and Yor, M. (1997).
\newblock The two-parameter {P}oisson-{D}irichlet distribution derived from a
  stable subordinator.
\newblock {\em {Ann. Probab.}}, 25(2):855--900.

\bibitem[Porteous et~al., 2006]{Port:etal:06}
Porteous, I., Ihler, A., Smyth, P., and Welling, M. (2006).
\newblock Gibbs sampling for (coupled) infinite mixture models in the stick
  breaking representation.
\newblock In {\em Proceedings of the Twenty-Second Conference on Uncertainty in
  Artificial Intelligence (UAI2006)}, pages 385--392.

\bibitem[Regazzini et~al., 2003]{Reg:etal:03}
Regazzini, E., Lijoi, A., and Pr{\"{u}}nster, I. (2003).
\newblock Distributional results for means of normalized random measures with
  independent increments.
\newblock {\em {Ann. Statist.}}, 31(2):560--585.

\bibitem[Richardson and Green, 1997]{Ric:Gre:97}
Richardson, S. and Green, P.~J. (1997).
\newblock {On Bayesian analysis of mixtures with an unknown number of
  components}.
\newblock {\em {J. R. Stat. Soc. Ser. B}}, 59:731--792.

\bibitem[Sethuraman, 1994]{Seth:94}
Sethuraman, J. (1994).
\newblock {A constructive definition of Dirichlet priors}.
\newblock {\em Stat. Sin.}, 4:639--650.

\bibitem[Sokal, 1997]{Sok:97}
Sokal, A. (1997).
\newblock {Monte Carlo Methods in Statistical Mechanics: Foundations and New
  Algorithms}.
\newblock In DeWitt-Morette, C., Cartier, P., and Folacci, A., editors, {\em
  Functional Integration: Basics and Applications}, pages 131--192, Boston, MA.
  Springer US.

\bibitem[Walker, 2007]{Walk:07}
Walker, S.~G. (2007).
\newblock Sampling the {D}irichlet mixture model with slices.
\newblock {\em Communications in {S}tatistics-{S}imulation and {C}omputation},
  36(1):45--54.

\bibitem[Zanella, 2020]{Zan:20}
Zanella, G. (2020).
\newblock Informed proposals for local {MCMC} in discrete spaces.
\newblock {\em {J. Amer. Statist. Assoc.}}, 115(530):852--865.

\end{thebibliography}


\newpage
\setcounter{figure}{0}
\setcounter{table}{0}
\setcounter{equation}{0}
\renewcommand{\thefigure}{\thesection\arabic{figure}}
\renewcommand{\thetable}{\thesection\arabic{table}}
\renewcommand{\theequation}{\thesection\arabic{equation}}
\renewcommand{\thelemma}{\thesection.\arabic{lemma}}
\renewcommand{\thetheorem}{\thesection.\arabic{theorem}}
\renewcommand{\theremark}{\thesection.\arabic{remark}}

\appendix

\begin{center}
\begin{Large}
{\bf Appendix}
\end{Large}
\end{center}

\section{Derivation of the ordered allocation sampler}\label{app:OAS}

\subsection*{Updating component parameters, $\bm{\tx}$:}

Under the convention that the empty product equals one, from \eqref{eq:like_sb} and \eqref{eq:prior_sb}, or \eqref{eq:like} and \eqref{eq:prior}, we find
\begin{equation}\label{eq:x_post}
\pi(\,\bm{\tx}\mid \cdots) = \prod_{j\geq 1}\pi(\,\tx_j\mid\rest) \propto \prod_{j\geq 1}\prod_{i \in D_j} g(y_i\mid \tx_j) \times \nu(\tx_j).
\end{equation}
This means that the components parameters, $\tx_j$, will be updated independently of each other. For $j > k_n$ we sample $\tx_j \sim \nu$. As for $j \leq k_n$, it is easy to sample from \eqref{eq:x_post} if $g$ and $\nu$ form a conjugate pair. Otherwise, the updating of observed component parameters is treated identically as in other samplers.

\subsection*{Updating weights in order of appearance, $\bm{\tp}$:}

It follows from \eqref{eq:like_sb} and \eqref{eq:prior_sb} that the full conditional  of $\bm{\tp}$ is
\begin{equation}\label{eq:p_post}
\pi(\,\bm{\tp}\mid\rest) \propto \prod_{j=1}^{k_n}\tp_j^{n_j-1}
  \bigg(1-\sum_{l=1}^{j-1}\tp_l\bigg) \times \pi(\bm{\tp}).
\end{equation}
For some models such as the Dirichlet, the Pitman-Yor and NRMI models \citep{Ish:Jam:01,Fav:etal:16} the stick-breaking decomposition of $\bm{\tp}$ is available meaning that we can decompose $\tp_1 = v_1$, and
\[
\tp_j = v_j\prod_{l=1}^{j-1}(1-v_l), \quad j \geq 2,
\]
for some sequence $\bm{v} = (v_j)_{j=1}^{\infty}$ for which its law is known. In this case, noting that $1-\sum_{l=1}^{j-1}\tp_l = \prod_{l=1}^{j-1}(1-v_l)$, we can update $\bm{\tp}$ via $\bm{v}$. We find
\begin{equation}\label{eq:v_post_sb}
  \pi(\bm{v}\mid\rest\,)
  \propto \bigg[\prod_{j=1}^{k_n}v_j^{n_j-1}(1-v_j)^{\sum_{l > j}n_l}
  \bigg]\times\pi(\bm{v}),
\end{equation}
For example, for the Pitman-Yor model we have that apriori, $v_j \ind  \mathsf{Be}(1-\sigma,\beta+j\sigma)$, $j \geq 1$, for some $\sigma \in [0,1)$ and $\beta > -\sigma$. Thus one would update elements of $\bm{v}$ independently of each other by sampling $v_j \sim \Be(n_j-\sigma,\sum_{l > j}n_l+\beta+j\sigma)$ for $j \leq k_n$ and $v_j \sim \Be(1-\sigma,\beta+j\sigma)$ for $j > k_n$.

\subsection*{Updating weights in order of appearance, $\bm{\tp}$, via $\bm{p}$ and $\bm{\alpha}$:}
As a notational device, here we keep \comillas{$\cdots$} to include all random terms other than $\bm{p}$ and $\bm{\alpha}$. From \eqref{eq:like} and \eqref{eq:prior} we find
\begin{equation}\label{eq:post_p_a}
\pi(\bm{p},\bm{\alpha}\mid \cdots) \propto \prod_{j=1}^{k_n}p_{\alpha_j}^{n_j} \times \prod_{j > k_n}  p_{\alpha_j}\bigg(1-\sum_{l=1}^{j-1}p_{\alpha_l}\bigg)^{-1}\Ind_{\clA} \times \pi(\bm{p}),
\end{equation}
The key idea to attain simple updating steps is to first update $\bm{\alpha}_{1:k_n}$ conditioning on $\bm{p}$ and $\bm{\alpha}_{k_n+1:\infty} = (\alpha_j)_{j > k_n}$ and later, update $\bm{p}$ and  $\bm{\alpha}_{k_n+1:\infty}$ as a block conditioning on $\bm{\alpha}_{1:k_n}$. Noting that for each $j > 1$, $\sum_{l=1}^{j-1}p_{\alpha_l} = 1-\sum_{l=j}^{\infty}p_{\alpha_l}$ we obtain
\begin{equation}\label{eq:post_a_kn}
\pi(\bm{\alpha}_{1:k_n}\mid \bm{p},\bm{\alpha}_{k_n+1:\infty},\cdots)\propto \prod_{j=1}^{k_n}p_{\alpha_j}^{n_j}\Ind_{\clA}.
\end{equation}
The event $\clA$ indicates that this is about sampling from a weighted permutation $\rho$ of the $k_n$ integers corresponding to the current values of $\bm{\alpha}_{1:k_n}$. In particular, we do not require to observe $\bm{\alpha}_{k_n+1:\infty}$ in order to sample from \eqref{eq:post_a_kn}, nor infinitely many weights. In fact, in order to update $\bm{\alpha}_{1:k_n}$, we can first sample $\rho$ from
\begin{equation}\label{eq:pi_rho}
\pi(\rho) = \frac{1}{Z}\prod_{j=1}^{k_n}w_{j,\rho(j)}, \quad \rho \in \mathcal{S}_{k_n}
\end{equation}
where $w_{j,l} = p_{\alpha_l}^{n_j}$, $\alpha_1,\ldots,\alpha_k$ are the current values of $\bm{\alpha}_{1:k_n}$, $Z$ is the normalizing constant, and $\mathcal{S}_{k_n}$ is the space of permutations of $\{1,\ldots,k_n\}$. Later we simply apply $\rho$ to the indexes of $\bm{\alpha}_{1:k_n}$ obtaining $(\alpha_{\rho(1)},\ldots,\alpha_{\rho(k_n)})$. Now, when $k_n$ is small, sampling from $\pi(\rho)$ can be done directly, by enumerating all possible permutations. Otherwise, we suggest to follow \cite{Zan:20} and adopt a Metropolis-Hastings scheme using a locally-balanced informed proposal (cf. Example 3 therein). For the simulation study, we chose the proposal distribution $Q(\rho^*\mid \rho) = \frac{1}{Z(\rho)}\sqrt{\pi(\rho^*)}\Ind_{\{\rho^*\in N(\rho)\}}$, where $Z(\rho) = \sum_{\rho^* \in N(\rho)}\sqrt{\pi(\rho^*)}$ and $N(\rho)$ is the neighbourhood of $\rho$ containing all permutations $\rho^* \in \mathcal{S}_{k_n}$ such that for some $i \neq j$: $\rho^*(i) = \rho(j)$, $\rho^*(j) = \rho(i)$ and $\rho^*(l) = \rho(l)$ for $l \not\in\{i,j\}$. Another alternative is to choose $2 \leq m < k_n$ such that it is possible to enumerate permutations on $\mathcal{S}_m$, randomly choose distinct indexes $\clJ = \{j_1,\ldots,j_m\} \subseteq \{1,\ldots,k_n\}$ and sample from
\[
\pi(\,(\rho(j_1),\ldots,\rho(j_m)) \mid (\rho(l))_{l \not\in \clJ}) \propto \prod_{i=1}^{m}w_{j_i,\rho(j_i)}, \quad \rho \in \mathcal{S}_{k_n}
\]
Later apply $\rho$ to the indexes of $\bm{\alpha}_{1:k_n}$, recompute the weights $w_{j,l}$ and repeat this step a few times. This is equivalent to sample repeatedly from
\[
\pi((\alpha_{j_1},\ldots,\alpha_{j_m}) \mid \bm{p}, (\alpha_l)_{l \not\in \clJ},\cdots) \propto \prod_{i=1}^{m}p_{\alpha_{j_i}}^{n_j}\Ind_{\clA},
\]
where $\clJ$ is chosen differently each time, thus attaining a Gibbs within Gibbs scheme.

As for the updating of $\bm{p}$ and $\bm{\alpha}_{k_n+1:\infty}$ as a block, we will first update elements in $\bm{p}$ integrating $\bm{\alpha}_{k_n+1:\infty}$ out and later we will update $\bm{\alpha}_{k_n+1:\infty}$ conditioning on elements of $\bm{p}$. We find
\[
\pi(\bm{p},\bm{\alpha}_{k_n+1:\infty}\mid \bm{\alpha}_{1:k_n},\cdots) \propto \prod_{j=1}^{k_n}p_{\alpha_j}^{n_j} \times \prod_{j > k_n} p_{\alpha_j}\bigg(1-\sum_{l=1}^{j-1}p_{\alpha_l}\bigg)^{-1}\Ind_{\clA} \times \pi(\bm{p}).
\]
Noting that $ \prod_{j=1}^{k_n}p_{\alpha_j}^{n_j} = \prod_{j=1}^{k_n}p_j^{r_j}$, where $r_j = \sum_{l=1}^{k_n}n_l\Ind_{\{\alpha_l = j\}}$, and summing over all possible values of $\bm{\alpha}_{k_n+1:\infty}$ we get
\begin{equation}\label{eq:post_p}
\pi(\,\bm{p}\mid \bm{\alpha}_{1:k_n},\cdots) \propto \prod_{j=1}^{\overline{\bm{\alpha}}_{k_n}}p_{j}^{r_j} \times \pi(\bm{p}).
\end{equation}
In particular, if the stick-breaking decomposition of $\bm{p}$ is available, i.e. we can write $p_j = v_j\prod_{l=1}^{j-1}(1-v_l)$, where the law of $\bm{v}$ is known. Then we can update $\bm{p}$ via sampling $\bm{v}$ from
\begin{equation}\label{eq:post_v}
\pi(\,\bm{v}\mid \bm{\alpha}_{1:k_n},\cdots) \propto \prod_{j=1}^{\overline{\bm{\alpha}}_{k_n}}v_{j}^{r_j}(1-v_j)^{\sum_{l > j}r_l} \times \pi(\bm{v}).
\end{equation}
For instance, if a priori $v_j\ind \Be(a_j,b_j)$, then we would update elements in $\bm{v}$ by independently sampling $v_j \ind \Be(r_j+a_j,\sum_{l>j}r_l+b_j)$ for $j \leq k_n$ and  $v_j\ind \Be(a_j,b_j)$ for $j > k_n$. Other examples can be found in the supplementary material of \cite{Deb:Gil:23}
It only remains to specify how to update $\alpha_j$ for $j \geq k_n$. We have that
\begin{equation}\label{eq:post_a_new}
\pi(\bm{\alpha}_{k_n+1:\infty}\mid \bm{p},\bm{\alpha}_{1:k_n},\cdots) =  \prod_{j > k_n} p_{\alpha_j}\bigg(1-\sum_{l=1}^{j-1}p_{\alpha_l}\bigg)^{-1}\Ind_{\clA} = \prod_{j > k_n} \pi(\alpha_j\mid \bm{p},\bm{\alpha}_{1:j-1}).
\end{equation}
Hence $\alpha_{k_n+1},\alpha_{k_n+2},\ldots$ are sampled without replacement from $\{j\geq 1: j\not\in \bm{\alpha}_{1:k_n}\}$ with probabilities proportional to $\{p_j\geq 1: j\not\in \bm{\alpha}_{1:k_n}\}$, which can be achieved by sampling sequentially from \eqref{eq:size-biased_pick}, as a a priori.

Note that switches between indexes in $\bm{\alpha}_{1:k_n}$ and indexes in $\bm{\alpha}_{k_n+1:\infty}$ only occur when $k_n$ grows as a consequence of an update in $\bm{d}_{1:n}$. To facilitate the mixing one can include the \emph{acceleration step} described in Algorithm \ref{alg:alpha}, after updating $\bm{\alpha}_{1:k_n}$ from \eqref{eq:post_a_kn} and before updating $\bm{\alpha}_{k_n+1:\infty}$ from \eqref{eq:post_a_new}.

\begin{algorithm}[tb]
\caption{Acceleration step for $\bm{\alpha}$}\label{alg:alpha}
\begin{small}
For each $j \leq k_n$:
\begin{enumerate}
\item  Sample an auxiliary random variable $\alpha'_j$ from $\pi(\alpha_{k_n+1}\mid \bm{p},\bm{\alpha}_{1:k_n},\cdots)$.
\item Define
\[
\mathrm{q}_{\mathrm{preserve}} = p_{\alpha_j}^{n_j}p_{\alpha'_j}\bigg(1-\sum_{l=1}^{k_n}p_{\alpha_l}\bigg)^{-1}
\]
and
\[
\mathrm{q}_{\mathrm{switch}} =  p_{\alpha'_j}^{n_j}p_{\alpha_j}\bigg(1-\sum_{l=1}^{k_n}p_{\alpha_l}+p_{\alpha_j}-p_{\alpha'_j}\bigg)^{-1}
\]
\item With probability proportional to $\mathrm{q}_{\mathrm{switch}}$ set $\alpha_j = \alpha'_j$ and with probability proportional to $\mathrm{q}_{\mathrm{preserve}}$ do not modify $\alpha_j$.
\end{enumerate}
\end{small}
\end{algorithm}

Essentially, Algorithm \ref{alg:alpha} aims to draw samples from
\begin{equation}\label{eq:a_j_acc_step}
q(\alpha_j) = \pi(\alpha_j\mid (\alpha_l)_{l\leq k_n,l \neq j},\bm{p},\cdots) \propto p_{\alpha_j}^{n_j}\Ind_{\clA}, \quad \text{ for }j \leq k_n,
\end{equation}
allowing each $\alpha_j$ to take any integer value other than the ones in $(\alpha_l)_{l\leq k_n,l \neq j}$. To achieve this, we treat $\bm{\alpha}_{k_n+1:\infty}$ as a latent variable with finite dimensional distributions as in \eqref{eq:post_a_new}, and then update the pair $(\alpha_j,\alpha_{k_n+1})$ from
\[
\pi(\alpha_j,\alpha_{k_n+1}\mid \bm{p},(\alpha_l)_{l \not\in\{j,k_n+1\}},\cdots) \propto p_{\alpha_j}^{n_j}p_{\alpha_{k_n+1}}\bigg(1-\sum_{l=1}^{k_n}p_{\alpha_l}\bigg)^{-1},  \quad \text{ for }j \leq k_n.
\]

\begin{remark}\label{rem:1}
Notice that since $n$ data points cannot be generated by more than $n$ distinct components, i.e. $k_n \leq n$, at most the sampler will require to update $\tx_j$ and $\tp_j$ for $j \leq n$, and in practice much less. Namely, when updating component parameters and weights can just update $\bm{\tx}_{1:k_n}$, and $\bm{\tp_{1:k_n}}$ (possibly via $\bm{\alpha}_{1:k_n}$ and $\bm{p}_{1:\overline{\bm{\alpha}}_{k_n}}$). Later, when updating $d_i$, we can sample the required extra elements $\tx_j$ and $\tp_j$, for $j > k_n$ (possibly via $\alpha_j$ and sampling the necessary weights $p_l$ with $l > \overline{\bm{\alpha}}_{k_n}$) in view that a new of the mixture is discovered.
\end{remark}

\section{Distributional symmetries of weights in size-biased random order}\label{app:SBW}

\begin{lemma}\label{lem:G_k}
Let $\bm{\tp}$ be a sequence of weights that is invariant under size-biased permutation. For $k \geq 2$ define
\begin{equation}\label{eq:G_k}
G_k(\mathrm{d}p_1,\ldots,\mathrm{d}p_k) = \Prob(\tp_1 \in \ddr p_1,\ldots, \tp_k \in \ddr p_k)\prod_{j=1}^{k-1}\bigg(1-\sum_{l=1}^j p_l\bigg).
\end{equation}
Then $G_k$ is symmetric, i.e. it is invariant with respect to permutations of coordinates of $\R^k$. In particular, if the joint density $\pi(\bm{\tp}_{1:k}) = \pi(\tp_1,\ldots,\tp_k)$ of $\bm{\tp}_{1:k}$ exists,
\begin{equation}\label{eq:g_k}
g_k(\tp_1,\ldots,\tp_k) = \pi(\tp_1,\ldots,\tp_k)\prod_{j=1}^{k-1}\bigg(1-\sum_{l=1}^j \tp_l\bigg)
\end{equation}
is a symmetric function.
\end{lemma}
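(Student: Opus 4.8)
The plan is to exploit the defining property of invariance under size-biased permutation (ISBP): if $\bm{\tp}$ is ISBP, then for any fixed $k$ the joint law of $(\tp_1,\ldots,\tp_k)$ can be described by the sampling-without-replacement scheme analogous to \eqref{eq:size-biased_pick}, applied to the ``background'' ranked (or exchangeably-labelled) weights. Concretely, I would introduce an auxiliary sequence $\bm{q} = (q_j)_{j\ge1}$ of weights (e.g. the ranked rearrangement of $\bm{\tp}$, or any exchangeable labelling) and a size-biased pick $\bm{\alpha}$ such that $\tp_j = q_{\alpha_j}$ with $\Prob(\alpha_1 = j\mid \bm q) = q_j$ and $\Prob(\alpha_{l+1}=j\mid \bm q,\bm\alpha_{1:l}) = q_j/(1-\sum_{i=1}^l q_{\alpha_i})\,\Ind_{\{j\notin\bm\alpha_{1:l}\}}$, exactly as in Theorem \ref{theo:rep_sss}. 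This is precisely what ISBP means. The point of multiplying by $\prod_{j=1}^{k-1}(1-\sum_{l=1}^j\tp_l)$ in \eqref{eq:G_k} is that these are exactly the normalizing denominators appearing in the size-biased pick, so the product telescopes the sequential conditional probabilities into a single symmetric expression.

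The key computation is then: for a fixed set of distinct indices $j_1,\ldots,j_k$,
\[
\Prob(\alpha_1 = j_1,\ldots,\alpha_k = j_k\mid \bm q)
= q_{j_1}\cdot\frac{q_{j_2}}{1-q_{j_1}}\cdots\frac{q_{j_k}}{1-\sum_{i=1}^{k-1}q_{j_i}}
= \frac{\prod_{i=1}^k q_{j_i}}{\prod_{m=1}^{k-1}\bigl(1-\sum_{i=1}^{m}q_{j_i}\bigr)}.
\]
Multiplying both sides by $\prod_{m=1}^{k-1}(1-\sum_{i=1}^m q_{j_i})$ kills the denominator and leaves $\prod_{i=1}^k q_{j_i}$, which is manifestly symmetric in $(j_1,\ldots,j_k)$. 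Translating back: conditionally on $\bm q$, the measure $\Prob(\tp_1\in\ddr p_1,\ldots,\tp_k\in\ddr p_k\mid \bm q)\prod_{m=1}^{k-1}(1-\sum_{l=1}^m p_l)$ assigns to the event $\{\alpha_1=j_1,\ldots,\alpha_k=j_k\}$ a value symmetric under permuting the $j_i$'s; summing over all ordered $k$-tuples of distinct indices (and then taking expectation over $\bm q$) shows that $G_k$, as a measure on $\R^k$, is invariant under permuting its coordinates. The density statement \eqref{eq:g_k} follows immediately by taking Radon--Nikodym derivatives with respect to Lebesgue measure on $\R^k$, since permutation-invariance of a measure with a density is equivalent to (a.e.) permutation-invariance of the density.

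The main obstacle is bookkeeping rather than anything deep: one must be careful that the argument does not secretly assume $\bm q$ has finitely many atoms or that the $q_j$ are distinct, and one must handle the passage from the conditional statement (given $\bm q$) to the unconditional one cleanly, i.e. justify that integrating a symmetric-in-coordinates kernel against $\Prob(\bm q\in\cdot)$ preserves symmetry. A clean way to avoid the auxiliary $\bm q$ altogether is to argue directly: by ISBP, a size-biased permutation of $\bm{\tp}$ has the same law as $\bm{\tp}$ itself, so for a random permutation $\tau$ of $\{1,\dots,k\}$ drawn from the size-biased-pick kernel applied to $(\tp_1,\dots,\tp_k,1-\sum_{l\le k}\tp_l)$-type weights, $(\tp_{\tau(1)},\dots,\tp_{\tau(k)})\mid \bm{\tp}$ has a density proportional to $\prod_{m=1}^{k-1}(1-\sum p_l)^{-1}$ times nothing else in the relevant factor — and matching this against the law of $(\tp_1,\dots,\tp_k)$ gives \eqref{eq:g_k}. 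Either route, the essential content is the telescoping identity above; everything else is routine. I would also note that this is essentially a restatement of a result of \cite{Pit:96}, so a short proof citing that source plus the telescoping identity suffices.
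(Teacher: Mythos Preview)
Your proposal is correct and follows essentially the same route as the paper: both introduce auxiliary background weights and exploit the telescoping identity that the size-biased sampling probability times $\prod_{j=1}^{k-1}(1-\sum_{l\le j}\tp_l)$ collapses to the symmetric product $\prod_i q_{j_i}$, then average over the background weights. The paper's only cosmetic difference is that it packages the argument via an auxiliary iid sequence $e_i\mid\bm p\iid\sum_j p_j\delta_j$ and the event $\mathcal{E}_k=\{e_1,\ldots,e_k\text{ all distinct}\}$, identifying $G_k$ away from the origin with the law of the manifestly exchangeable vector $(p_{e_1},\ldots,p_{e_k})\Ind_{\mathcal{E}_k}$, which cleanly sidesteps the bookkeeping worries you flag about ties and infinitely many atoms.
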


The proof of Lemma \eqref{lem:G_k} can be found in Theorem 4 and Corollary 5 by \cite{Pit:96}. For reader's convenience we recall it below

\begin{proof}
Let $\bm{p}$ be any rearrangement of $\bm{\tp}$, and define the exchangeable sequence $\bm{e} = (e_i)_{i=1}^{\infty}$ driven by $\sum_{j=1}^{\infty}p_j\delta_j$, this is $e_i\mid \bm{p} \iid \sum_{j=1}^{\infty} p_j \delta_j$. Also define sequentially $\tau_j = \inf\{i > \tau_{j-1}: e_i \not\in \{e_{\tau_1},\ldots,e_{\tau_{i-1}}\}\}$, with $\tau_1 = 1$. This way $\bm{\alpha} = (\alpha_j)_{j=1}^{\infty}$ given by $\alpha_j = e_{\tau_j}$ are the distinct values that $\bm{e}$ exhibits in order of appearance. We have that
\[
\Prob(\alpha_1 = l\mid \bm{p}) = \Prob(e_{\tau_1} = j\mid \bm{p}) = p_j
\]
and for $l \geq 1$,
\[
\Prob(\alpha_{l+1} = j\mid \bm{p},\alpha_1,\ldots,\alpha_l) = \Prob(e_{\tau_{l+1}} = j\mid \bm{p},e_{\tau_1},\ldots,e_{\tau_l}) = \frac{p_j}{1-\sum_{i=1}^l p_i}\Ind_{\{j \not \in \{\alpha_1,\ldots,\alpha_l\}\}},
\]
this is $\bm{\alpha}$ satisfies \eqref{eq:size-biased_pick}. It then follows from the definition of size-biased random permutations that $\bm{\tp} = (\tp_j)_{j=1}^{\infty}$ is identical in distribution to $\bm{\tilde{q}} = (\tilde{q}_j)_{j=1}^{\infty}$ where $\tilde{q}_j = p_{\alpha_j} = p_{e_{\tau_j}}$. Since we are concerned with a distributional property, we can assume without loss of generality that $\tp_j = p_{\alpha_j} = p_{e_{\tau_j}}$. Now, similarly as it occurs in \emph{iii} of Theorem \ref{theo:rep_sss} we have that
\begin{equation*}
\Prob(e_i \in \cdot\mid \bm{\tp},\alpha_{k_i+1},\bm{e}_{1:i}) = \sum_{j=1}^{k_i}\tp_j\delta_{\alpha_j} + \bigg(1-\sum_{j=1}^{k_i} \tp_j\bigg)\delta_{\alpha_{k_i+1}}
\end{equation*}
where $\alpha_1,\ldots,\alpha_{k_i}$ are the distinct values that $\bm{e}_{1:i}$ exhibits in order of appearance. Thus, the probability of the event $\mathcal{E}_k = \{e_1,\ldots,e_k \text{ are all distinct}\}$ is
\begin{equation}\label{eq:Ek}
\Prob(\Ind_{\mathcal{E}_k} = 1 \mid \bm{\tp}) = \Prob(\mathcal{E}_k \mid \bm{\tp}) = \prod_{j=1}^{k-1}\bigg(1-\sum_{l=1}^j \tp_l\bigg).
\end{equation}
Now, consider the random vector
\begin{equation*}\label{eq:tp_exch_Ek}
(\tp_1,\ldots,\tp_k)\Ind_{\mathcal{E}_k} = (p_{e_{\tau_1}},\ldots,p_{e_{\tau_k}})\Ind_{\mathcal{E}_k} = (p_{e_1},\ldots,p_{e_k})\Ind_{\mathcal{E}_k},
\end{equation*}
which is exchangeable because $e_i\mid \bm{p} \iid \sum_{j=1}^{\infty} p_j \delta_j$. It follows from \eqref{eq:Ek} that the measure, $G_k$, defined in \eqref{eq:G_k} and the distribution of $(\tp_1,\ldots,\tp_k)\Ind_{\mathcal{E}_k}$ are identical when restricted to $\R^k\setminus\{\bm{0}^k\}$, where $\bm{0}^k = (0,\ldots,0)$ is the origin in $\R^k$. Hence we must have that $G_k$ is symmetric. The second statement is straight forward when the joint density of $\bm{\tp_{1:k}}$ exists.
\end{proof}

For the rest of this section we will assume that the joint densities $\pi(\bm{\tp}_{1:k}) = \pi(\tp_1,\ldots,\tp_k)$ and $\pi(\bm{p}_{1:k}) = \pi(p_1,\ldots,p_k)$  exist. However, the following results can be generalized in the same way that \eqref{eq:G_k} generalizes \eqref{eq:g_k}.

\begin{proposition}\label{prop:p_k+1}
Let $\bm{\tp}$ be a sequence of weights that is invariant under size-biased permutations. Then, for every $k \geq 1$, the distribution of $\tp_{k+1}$ given $\bm{\tp}_{1:k}$ is invariant under permutations of $\bm{\tp}_{1:k}$. This is
\[
\pi(\tp_{k+1}\mid \bm{\tp}_{1:k}) = H_{k+1}(\tp_1,\ldots,\tp_k,\tp_{k+1})
\]
for some function $H_k$ that is invariant with respect to permutations of the first $k$ coordinates.
\end{proposition}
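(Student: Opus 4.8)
The plan is to obtain the statement as an immediate consequence of Lemma~\ref{lem:G_k}. The case $k=1$ is vacuous, since $\bm{\tp}_{1:1}=(\tp_1)$ admits only the trivial permutation, so from now on take $k\geq 2$. Because $\bm{\tp}$ is invariant under size-biased permutation, $1-\sum_{l=1}^m \tp_l=\sum_{l>m}\tp_l>0$ almost surely for every finite $m$ (cf.\ Theorem~\ref{theo:rep_sss}), so all densities and factors below are strictly positive on the support of $\bm{\tp}_{1:k+1}$ and the manipulations that follow are legitimate.

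First I would write the conditional density as a ratio of joint densities, $\pi(\tp_{k+1}\mid\bm{\tp}_{1:k})=\pi(\tp_1,\ldots,\tp_{k+1})/\pi(\tp_1,\ldots,\tp_k)$, and then compare it with the symmetrized densities $g_k$ and $g_{k+1}$ of \eqref{eq:g_k}. Since the products defining $g_{k+1}(\tp_1,\ldots,\tp_{k+1})$ and $g_k(\tp_1,\ldots,\tp_k)$ differ by exactly the factor $1-\sum_{l=1}^k\tp_l$, one gets the identity
$$
g_{k+1}(\tp_1,\ldots,\tp_{k+1})
=\pi(\tp_{k+1}\mid\bm{\tp}_{1:k})\,
\left(1-\sum_{l=1}^k\tp_l\right)
g_k(\tp_1,\ldots,\tp_k),
$$
whence
$$
\pi(\tp_{k+1}\mid\bm{\tp}_{1:k})
=\frac{g_{k+1}(\tp_1,\ldots,\tp_{k+1})}
{\left(1-\sum_{l=1}^k\tp_l\right)g_k(\tp_1,\ldots,\tp_k)}
=:H_{k+1}(\tp_1,\ldots,\tp_{k+1}).
$$

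It then remains only to read off the symmetry from the right-hand side. By Lemma~\ref{lem:G_k}, $g_{k+1}$ is invariant under all permutations of its $k+1$ arguments, in particular under permutations of the first $k$; $g_k$ is invariant under permutations of $\tp_1,\ldots,\tp_k$; and $1-\sum_{l=1}^k\tp_l$ is patently symmetric in $\tp_1,\ldots,\tp_k$. Hence $H_{k+1}$ is invariant under permutations of its first $k$ coordinates, which is precisely the asserted form of $\pi(\tp_{k+1}\mid\bm{\tp}_{1:k})$. There is essentially no obstacle here: the only points worth a remark are the positivity of the two denominators, which is guaranteed by the size-biased invariance, and the trivial boundary case $k=1$.
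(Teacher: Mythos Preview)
Your proof is correct and follows essentially the same approach as the paper: both arguments use Lemma~\ref{lem:G_k} to write $g_{k+1}(\bm{\tp}_{1:k+1})=\pi(\tp_{k+1}\mid\bm{\tp}_{1:k})\big(1-\sum_{l=1}^k\tp_l\big)g_k(\bm{\tp}_{1:k})$, solve for the conditional density, and read off the symmetry in the first $k$ arguments from the symmetry of $g_k$, $g_{k+1}$ and the partial sum. Your additional remarks on positivity of the denominators and the vacuous case $k=1$ are reasonable clarifications that the paper's proof omits.
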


\begin{proof}
We know from Lemma \eqref{eq:G_k} that
\begin{align*}
g_{k+1}(\bm{\tp}_{1:k+1}) & = \pi(\bm{\tp}_{1:k+1})\prod_{j=1}^{k}\bigg(1-\sum_{l=1}^j \tp_l\bigg)\\
& = \pi(\tp_{k+1} \mid \bm{\tp}_{1:k})\pi(\bm{\tp}_{1:k})\bigg(1-\sum_{l=1}^k \tp_l\bigg)\prod_{j=1}^{k-1}\bigg(1-\sum_{l=1}^j \tp_l\bigg)\\
& =  \pi(\tp_{k+1} \mid \bm{\tp}_{1:k})\bigg(1-\sum_{l=1}^k \tp_l\bigg)g_k(\bm{\tp}_{1:k})
\end{align*}
where $g_{k+1}$ and $g_k$ are symmetric. Hence
\[
 \pi(\tp_{k+1} \mid \bm{\tp}_{1:k}) = \bigg(1-\sum_{l=1}^k \tp_l\bigg)^{-1} \frac{g_{k+1}(\bm{\tp}_{1:k+1})}{g_k(\bm{\tp}_{1:k})}
\]
is clearly invariant with respect to permutations of $\bm{\tp}_{1:k} = (\tp_1,\ldots,\tp_k)$.
\end{proof}

\begin{proposition}\label{prop:pa_k+1}
Let $\bm{p}$ be any sequence of weights and let $\bm{\alpha}$ be as in \eqref{eq:size-biased_pick}. Set $\bm{\alpha}_{1:k} = (\alpha_1,\ldots,\alpha_k)$, $\overline{\bm{\alpha}}_k = \max\{\bm{\alpha}_{1:k}\}$ and $\bm{p}_{1:\overline{\bm{\alpha}}_k} = (p_1,\ldots,p_{\overline{\bm{\alpha}}_k})$ for each $k \geq 1$. Then, for every $k \geq 1$, $\overline{\bm{\alpha}}_k \leq l$ and $L \geq \overline{\bm{\alpha}}_{k+1}$, the conditional density
\[
\pi(\bm{p}_{l+1:L},\alpha_{k+1}\mid \bm{\alpha}_{1:k},\bm{p}_{1:l})
\]
is invariant under permutations of $\bm{\alpha}_{1:k}$, where $\bm{p}_{l+1:L} = (p_{l+1},\ldots,p_{L})$, with the convention that $\bm{p}_{l+1:L} $ is empty if $l \geq L$.
\end{proposition}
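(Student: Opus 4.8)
The plan is to compute the conditional density $\pi(\bm p_{l+1:L},\alpha_{k+1}\mid\bm\alpha_{1:k},\bm p_{1:l})$ explicitly from the sequential sampling rule \eqref{eq:size-biased_pick}, and then to read off, factor by factor, that the resulting expression depends on $\bm\alpha_{1:k}$ only through the unordered set $\{\alpha_1,\ldots,\alpha_k\}$. The starting point is the conditional law of the discovered indices given the weights: by the chain rule and \eqref{eq:size-biased_pick}, for any value of $\bm\alpha_{1:k}$ with distinct entries,
\[
\pi(\bm\alpha_{1:k}\mid\bm p)=\Big(\prod_{j=1}^{k}p_{\alpha_j}\Big)\Big(\prod_{m=1}^{k-1}\big(1-\textstyle\sum_{i=1}^{m}p_{\alpha_i}\big)\Big)^{-1},
\]
and analogously for $\bm\alpha_{1:k+1}$ (with an extra factor $p_{\alpha_{k+1}}$, one more leftover-mass term, and the indicator that $\alpha_{k+1}$ is new). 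The useful observation here, in the spirit of Lemma \ref{lem:G_k}, is that $\pi(\bm\alpha_{1:k}\mid\bm p)\prod_{m=1}^{k-1}(1-\sum_{i=1}^m p_{\alpha_i})=\prod_{j=1}^{k}p_{\alpha_j}$ is already symmetric in $\bm\alpha_{1:k}$, so all the asymmetry sits in the product of leftover masses.

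Next I would pass from these conditional laws to the target conditional density. Because $\overline{\bm\alpha}_k\leq l$ and $\overline{\bm\alpha}_{k+1}\leq L$, the displayed formulas for $\pi(\bm\alpha_{1:k}\mid\bm p)$ and $\pi(\bm\alpha_{1:k+1}\mid\bm p)$ involve $\bm p$ only through $\bm p_{1:l}$ and $\bm p_{1:L}$ respectively; hence the joint densities factor as $\pi(\bm\alpha_{1:k},\bm p_{1:l})=\pi(\bm p_{1:l})\,\pi(\bm\alpha_{1:k}\mid\bm p)$ and $\pi(\bm\alpha_{1:k+1},\bm p_{1:L})=\pi(\bm p_{1:L})\,\pi(\bm\alpha_{1:k+1}\mid\bm p)$. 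Dividing one by the other, and using that the product of leftover masses telescopes (every factor with $m\leq k-1$ cancels), yields
\[
\pi(\bm p_{l+1:L},\alpha_{k+1}\mid\bm\alpha_{1:k},\bm p_{1:l})=\pi(\bm p_{l+1:L}\mid\bm p_{1:l})\cdot\frac{p_{\alpha_{k+1}}}{1-\sum_{i=1}^{k}p_{\alpha_i}}\,\Ind_{\{\alpha_{k+1}\notin\bm\alpha_{1:k}\}}.
\]
From here the conclusion is immediate: $\pi(\bm p_{l+1:L}\mid\bm p_{1:l})$ does not involve $\bm\alpha$ at all, $p_{\alpha_{k+1}}$ depends only on $\alpha_{k+1}$, and both $1-\sum_{i=1}^{k}p_{\alpha_i}$ and $\Ind_{\{\alpha_{k+1}\notin\bm\alpha_{1:k}\}}$ are functions of the set $\{\alpha_1,\ldots,\alpha_k\}$ alone; the degenerate case $l\geq L$ is handled by simply dropping the factor $\pi(\bm p_{l+1:L}\mid\bm p_{1:l})$.

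I expect the only genuinely delicate point to be the bookkeeping behind the factorizations of the joint densities — i.e. verifying carefully that the hypotheses $\overline{\bm\alpha}_k\leq l$ and $\overline{\bm\alpha}_{k+1}\leq L$ make $\pi(\bm\alpha_{1:k}\mid\bm p)$ and $\pi(\bm\alpha_{1:k+1}\mid\bm p)$ measurable with respect to $\bm p_{1:l}$ and $\bm p_{1:L}$, so that the conditioning is legitimate and no residual integration over the tail weights $p_{L+1},p_{L+2},\ldots$ appears. Once this is in place (noting also that $\bm\alpha_{1:k}$ has distinct entries almost surely, which is why we may restrict attention to such conditioning values), the telescoping computation and the term-by-term symmetry check are routine.
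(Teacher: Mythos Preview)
Your proposal is correct and follows essentially the same route as the paper: both arguments use the hypotheses $\overline{\bm\alpha}_k\leq l$ and $\overline{\bm\alpha}_{k+1}\leq L$ to factor the joint densities, then arrive at the explicit formula
\[
\pi(\bm p_{l+1:L},\alpha_{k+1}\mid\bm\alpha_{1:k},\bm p_{1:l})=\pi(\bm p_{l+1:L}\mid\bm p_{1:l})\cdot\frac{p_{\alpha_{k+1}}}{1-\sum_{i=1}^{k}p_{\alpha_i}}\,\Ind_{\{\alpha_{k+1}\notin\bm\alpha_{1:k}\}},
\]
from which the symmetry in $\bm\alpha_{1:k}$ is read off directly. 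The only cosmetic difference is that the paper first isolates the conditional independence $\pi(\bm p_{l+1:L}\mid\bm\alpha_{1:k},\bm p_{1:l})=\pi(\bm p_{l+1:L}\mid\bm p_{1:l})$ as a separate step before multiplying by $\pi(\alpha_{k+1}\mid\bm\alpha_{1:k},\bm p)$, whereas you compute the ratio of joints in one go; the content is the same.
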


\begin{proof}
First we note, from \eqref{eq:size-biased_pick}, that if $L \geq \overline{\bm{\alpha}}_{k}$,
\[
\pi(\bm{p}_{1:L},\bm{\alpha}_{1:k}) = \prod_{j = 1}^k  p_{\alpha_j}\bigg(1-\sum_{l=1}^{j-1}p_{\alpha_l}\bigg)^{-1} \Ind_{\clA}\,\pi(\bm{p}_{1:L}),
\]
where $\clA$ is the event that $\alpha_i \neq \alpha_j$ for $i \neq j$. Thus, for every $\overline{\bm{\alpha}}_{k} \leq l < L$
\[
\pi(\bm{p}_{l+1:L}\mid \bm{\alpha}_{1:k}, \bm{p}_{1:l}) \propto \pi(\bm{p}_{1:L}) \propto \pi(\bm{p}_{l+1:L} \mid \bm{p}_{1:l}).
\]
This means that for $l \geq \overline{\bm{\alpha}}_{k}$, $\bm{p}_{l+1:L}$ is conditionally independent of $\bm{\alpha}_{1:k}$ given $\bm{p}_{1:l}$. Now, from \eqref{eq:size-biased_pick} we also have
\[
\pi(\alpha_{k+1}\mid \bm{\alpha}_{1:k}, \bm{p}) = p_{\alpha_{k+1}}\bigg(1-\sum_{l=1}^{k}p_{\alpha_l}\bigg)^{-1}\Ind_{\clA}.
\]
Noting that, in terms of the weights, $\pi(\alpha_{k+1}\mid \bm{\alpha}_{1:k}, \bm{p})$ at most depends on $p_{\overline{\bm{\alpha}}_{k+1}}$, we get
\[
\pi(\bm{p}_{l+1:L},\alpha_{k+1}\mid \bm{\alpha}_{1:k},\bm{p}_{1:l}) = p_{\alpha_{k+1}}\bigg(1-\sum_{l=1}^{k}p_{\alpha_l}\bigg)^{-1}\Ind_{\clA}\,\pi(\bm{p}_{l+1:L} \mid \bm{p}_{1:l})
\]
which is clearly symmetric with respect to $\bm{\alpha}_{1:k}$.
\end{proof}

\begin{remark}\label{rem:pa_k+1}
Proposition \ref{prop:pa_k+1} proves that conditioning on the first $l$ weights, $\bm{p}_{1:l}$ as well as $\bm{\alpha}_{1:k}$ as in \eqref{eq:size-biased_pick}, with $p_{\alpha_j} \leq p_l$ for every $j \leq k$, the distribution subsequent weights, $\bm{p}_{l+1:L}$, and $\alpha_{k+1}$, depends on $\bm{\alpha}_{1:k}$ symmetrically. This is 
\begin{equation}\label{eq:pa_k+1}
\pi(\bm{p}_{l+1:L},\alpha_{k+1}\mid \bm{\alpha}_{1:k},\bm{p}_{1:l}) = \pi(\bm{p}_{l+1:L},\alpha_{k+1}\mid \bm{\alpha}^*_{1:k},\bm{p}_{1:l})
\end{equation}
for every permutation of  $\bm{\alpha}^*_{1:k} = (\alpha^*_1,\ldots,\alpha^*_k)$ of  $\bm{\alpha}_{1:k} = (\alpha_1,\ldots,\alpha_k)$. To explain the implications of this result for the OAS, first note that it follows from \eqref{eq:post_p_a} that
\begin{equation}\label{eq:pa_k+1_2}
\pi(\bm{p}_{l+1:L},\alpha_{k+1}\mid \bm{\alpha}_{1:k},\bm{p}_{1:l},\rest) = \pi(\bm{p}_{l+1:L},\alpha_{k+1}\mid \bm{\alpha}_{1:k},\bm{p}_{1:l}) 
\end{equation}
for $l \geq \overline{\bm{\alpha}}_{k}$, $k+1 = k_n$ and $n_{k+1} = 1$. Here, as a notational device, \comillas{$\rest$} includes all random terms in the OAS excluding $\bm{p}$ and $\bm{\alpha}$. Equations \eqref{eq:pa_k+1} and \eqref{eq:pa_k+1_2} yield that in order to update a newly discovered weight, $p^*_{k+1} = p_{\alpha^*_{k+1}}$ as 1.(ii) of Algorithm \ref{alg:EOAS}, it is enough to sample the necessary extra weights and the new index, $(\bm{p}_{l+1:L},\alpha^*_{k+1})$, from
\[
\pi(\bm{p}_{l+1:L},\alpha^*_{k+1}\mid \bm{\alpha}^*_{1:k},\bm{p}_{1:l},\rest) = \pi(\bm{p}_{l+1:L},\alpha^*_{k+1}\mid \bm{\alpha}^*_{1:k},\bm{p}_{1:l}),
\]
identically as one samples $(\bm{p}_{l+1:L},\alpha_{k+1})$, from $\pi(\bm{p}_{l+1:L},\alpha_{k+1}\mid \bm{\alpha}_{1:k},\bm{p}_{1:l})$ corresponding to the prior law $\pi(\bm{p},\bm{\alpha}) = \pi(\bm{\alpha}\mid \bm{p})\pi(\bm{p})$. Furthermore, notice that sampling from this distribution is easy because the normalization constant is known.
\end{remark}

\end{document}